\newcommand{\Scircle}{\mathbb S^1}
\newcommand{\Sloop}{\mathsf{loop}}
\newcommand{\inl}{\mathsf{inl}}
\newcommand{\inr}{\mathsf{inr}}
\newcommand{\UU}{\mathcal U}
\newcommand{\CC}{\mathcal C}
\newcommand{\opp}[1]{{#1}^{\mathsf{op}}}
\newcommand{\CCop}{\opp\CC}
\newcommand{\DD}{\mathcal D}
\newcommand{\defeq}{\vcentcolon\equiv}
\newcommand{\Type}{\UU}
\newcommand{\blank}{\mathord{\hspace{1pt}\_\hspace{1pt}}} 
\newcommand{\Set}{\mathsf{Set}}
\newcommand{\ra}{\to}
\newcommand{\id}{\mathsf{id}}
\newcommand{\ass}{\mathsf{assoc}}
\newcommand{\idl}{\mathsf{idl}}
\newcommand{\idr}{\mathsf{idr}}
\newcommand{\Con}{\mathsf{Con}}
\newcommand{\Ty}{\mathsf{Ty}}
\newcommand{\Sub}{\mathsf{Sub}}
\newcommand{\Tm}{\mathsf{Tm}}
\newcommand{\refl}{\mathsf{refl}}
\newcommand{\ap}{\mathsf{ap}}
\newcommand{\ext}{\triangleright}
\newcommand{\p}{\mathsf{p}}
\newcommand{\q}{\mathsf{q}}
\newcommand{\substT}[1]{[#1]^\mathsf{T}}
\newcommand{\substt}[1]{[#1]^\mathsf{t}}
\newcommand{\emptycon}{\scaleobj{.75}\bullet}
\newcommand{\emptysubst}{\upepsilon}
\newcommand{\comma}{,}
\newcommand{\N}{\mathbb{N}}
\newcommand{\sN}{\mathbb{N}^{\mathsf{s}}}
\DeclareRobustCommand{\ct}{\mathbin{\mathpalette\morphic@sqcdot\relax}}
\newcommand{\morphic@sqcdot}[2]{%
	\sbox\z@{$\m@th#1\centerdot$}%
	\ht\z@=.33333\ht\z@
	\vcenter{\box\z@}%
}
\newcommand{\isset}{\mathsf{isSet}}
\newcommand{\isprop}{\mathsf{isProp}}
\newcommand{\Prop}{\mathsf{Prop}}
\newcommand{\scomp}{\diamond}
\newcommand{\mc}{\scomp} 
\newcommand{\fc}{\circ} 
\newcommand{\isid}{\mathsf{isId}}
\newcommand{\ob}{\mathsf{Ob}}
\renewcommand{\hom}{\mathsf{Hom}}
\newcommand{\iseqv}{\mathsf{iseqv}}
\tikzset{
 boxedstuff/.style={
  rectangle,
  rounded corners,
  draw=black, thick,
  minimum height=2em,
  minimum width=5em,
  inner sep=5pt,
  },
 tikzincludebackground/.style={
  right hook->,
  shorten >=0.2cm,
  shorten <=0.2cm, 
  line width = 0.5pt,
  },
 tikzproject/.style={
  ->>,
  shorten >=0.2cm,
  shorten <=0.2cm, 
  line width = 0.5pt,
  },
 tikzpropproject/.style={
  >->>,
  shorten >=0.2cm,
  shorten <=0.2cm, 
  line width = 0.5pt,
  },
 tikzshortarrow/.style={
  shorten >=0.2cm,
  shorten <=0.2cm, 
  thick,
  },
 tikzequiv/.style={
  <->,
  shorten >=0.2cm,
  shorten <=0.2cm, 
  line width = 0.8pt,
  sloped, 
  above,
  },
 tikzbackgroundequiv/.style={
  <->,
  shorten >=0.2cm,
  shorten <=0.2cm, 
  line width = 0.5pt,
  right,
  },
}
\theoremstyle{plain}
\newtheorem{theorem}{Theorem}
\newtheorem{corollary}[theorem]{Corollary}
\newtheorem{lemma}[theorem]{Lemma}
\theoremstyle{definition}
\newtheorem{remark}[theorem]{Remark}
\newtheorem{example}[theorem]{Example}
\newtheorem{definition}[theorem]{Definition}
\newcommand{\steq}{=^\mathsf{s}}
\newcommand{\functor}{\xRightarrow{\textit{s}}}
\newcommand{\rffunctor}{\xRightarrow{\textit{RF}}}
\newcommand{\deltop}{\Delta_+^\mathsf{op}}
\newcommand{\finitedeltop}[1]{\left(\Delta_+^{\leq #1}\right)^\mathsf{op}}
\newcommand{\simplex}[1]{\Delta[{#1}]}
\newcommand{\horn}[2]{\Lambda^{#2}[{#1}]}
\newcommand{\iscontr}{\mathsf{isContr}}
\newcommand{\Acat}{\mathcal A}
\newcommand{\TT}{\mathcal T}
\newcommand{\TTp}{\mathcal T^\bullet}
\newcommand{\unit}{\mathsf{Unit}}
\newcommand{\bool}{\mathsf{2}}
\newcommand{\nerve}{\mathsf{N}_+}
\newcommand{\nervep}{\mathsf{N}_+^\bullet}
\newcommand{\Fin}{\mathsf{Fin}}
\newtcolorbox{takeaway}{%
    enhanced jigsaw,
    colback=white,
    borderline={1pt}{-2pt}{black},
    boxrule=.8pt,
}
\newtcolorbox{problembox}{%
    sharp corners,
    colback=white,
    borderline={1pt}{-2pt}{black},
    boxrule=.8pt,
}
\author{Nicolai Kraus}
\title[Internal $\boldsymbol{\infty}$-Categorical Models of Dependent Type Theory]{Internal $\boldsymbol{\infty}$-Categorical Models \\ of Dependent Type Theory \\[.2cm] \small Towards 2LTT Eating HoTT}
\thanks{This work has been supported by the Royal Society, grant No.~URF\textbackslash R1\textbackslash 191055,
and the European Union, co-financed by the European Social Fund (EFOP-3.6.2-16-2017-00013,
Thematic Fundamental Research Collaborations, Grounding Innovation in Informatics and
Infocommunication)}
\begin{document}

\maketitle

\begin{abstract}
Using dependent type theory to formalise the syntax of dependent type theory is a very active topic of study
and goes under the name of ``type theory eating itself'' or ``type theory in type theory.''
Most approaches are at least loosely based on Dybjer's \emph{categories with families (CwF's)}
and come with a type $\Con$ of contexts, a type family $\Ty$ indexed over it modelling types, and so on.
This works well in versions of type theory where the principle of \emph{unique identity proofs} (UIP) holds.
In homotopy type theory (HoTT) however, it is a long-standing and frequently discussed open problem whether the type theory ``eats itself'' and can serve as its own interpreter.
The fundamental underlying difficulty seems to be that categories are not suitable to capture a type theory in the absence of UIP. 

In this paper, we develop a notion of \emph{$\infty$-categories with families ($\infty$-CwF's)}.
The approach to higher categories used relies on the previously suggested \emph{semi-Segal types},
with a new construction of identity substitutions that allow for both univalent and non-univalent variations.
The type-theoretic universe as well as the internalised (set-level) syntax are models, although it remains a conjecture that the latter is initial.
To circumvent the known unsolved problem of constructing semisimplicial types,
the definition is presented in \emph{two-level type theory} (2LTT).


Apart from introducing $\infty$-CwF's, the paper explains the shortcomings of 1-categories in type theory without UIP as well as the difficulties of and approaches to internal higher-dimensional categories.
\end{abstract}

\tableofcontents

\section{Introduction: Formalising Type Theory}

\newcommand\spaceheadlines{-3cm}
\newcommand\spacenext{.04cm}
\newcommand\spaceover{-2.55cm}

\begin{figure}
    \begin{alignat}{10}
      &&&&& \hspace*{\spaceheadlines} \textit{a semicategory of contexts and substitutions:} \nonumber \\
      & \Con && : &\quad & \Type \label{gat:Con'} \\
      & \Sub && : && \Con \ra\Con \ra \Type   \\
      & \blank\scomp\blank && : && \Sub\,\Theta\,\Delta\ra\Sub\,\Gamma\,\Theta\ra\Sub\,\Gamma\,\Delta   \\
      & \ass && : && (\sigma \scomp \delta) \scomp \nu = \sigma \scomp (\delta \scomp \nu)   \\[\spacenext]
      &&&&& \hspace*{\spaceheadlines} \textit{identity morphisms as identity substitutions:} \nonumber \\
      & \id && : && \Sub\,\Gamma\,\Gamma  \\
      & \idl && : && \id\scomp\sigma = \sigma   \\
      & \idr && : && \sigma\scomp\id = \sigma   \\[\spacenext]
      &&&&& \hspace*{\spaceheadlines} \textit{a terminal oject as empty context:} \nonumber \\
      & \emptycon && : && \Con \\
      & \emptysubst && : && \Sub\,\Gamma\,\emptycon   \\
      & {\emptycon\upeta} && : && \forall (\sigma : \Sub\,\Gamma\,\emptycon). \; \sigma = \emptysubst  \\[\spacenext]
      &&&&& \hspace*{\spaceheadlines} \textit{a presheaf of types:} \nonumber\\
      & \Ty  && : && \Con\ra\Type \\
      & \blank \substT\blank && : && \Ty\,\Delta\ra\Sub\,\Gamma\,\Delta\ra\Ty\,\Gamma \\
      & \substT\id && : && A\substT\id = A \\
      & \substT\scomp && : && A\substT{\sigma \scomp \delta} = A\substT\sigma\substT\delta  \\[\spacenext]
      &&&&& \hspace*{\spaceheadlines} \textit{a (covariant) presheaf on the category of elements as terms:} \nonumber\\
      & \Tm  && : && (\Gamma:\Con)\ra\Ty\,\Gamma\ra\Type  \label{gat:Tm'} \\
      & \blank \substt\blank && : && \Tm\,\Delta\,A\ra(\sigma:\Sub\,\Gamma\,\Delta)\ra \Tm\,\Gamma\,(A\substT\sigma)  \label{gat:subtm'}\\
      & \substt\id && : && t\substt\id = t &&  \hspace*{\spaceover} \text{over } \substT\id \\
      & \substt\scomp && : && t\substt{\sigma \scomp \delta} = t\substt\sigma\substt\delta &&  \hspace*{\spaceover} \text{over } \substT\mc \\[\spacenext]
      &&&&& \hspace*{\spaceheadlines} \textit{context extension, modelled by representability:} \nonumber \\
      & \blank\ext\blank && : && (\Gamma:\Con)\ra\Ty\,\Gamma\ra\Con  \\
      & \p && : && \Sub\,(\Gamma\ext A)\,\Gamma   \label{gat:pi1'} \\
      & \q && : && \Tm\,(\Gamma\ext A)\,(A\substT\p)   \\
      & \blank \, \comma\blank && : && (\sigma:\Sub\,\Gamma\,\Delta)\ra\Tm\,\Gamma\,(A\substT\sigma)\ra \Sub\,\Gamma\,(\Delta\ext A) \\
      & {\ext\upbeta_1} && : && \p\scomp(\sigma\comma t) = \sigma    \\
      & {\ext\upbeta_2} && : && \q \substt{\sigma\comma t} = t &&  \hspace*{\spaceover}  \text{over } \substT\mc \text{ and } {\ext\upbeta_1}  \\
      & {\ext\upeta} && : && (\p \comma \q) = \id && \\
      & {\comma\scomp} && : && (\sigma,t)\scomp\nu = (\sigma\scomp\nu\comma t\substt\nu) &&  \hspace*{\spaceover}  \text{over } \substT\mc 
    \end{alignat}
    \caption{The formulation of a category with families (CwF) as a generalised algebraic theory (GAT), as given by Kaposi and others~\cite{KapKovKra:embedding,kaposi_et_al:LIPIcs:2019:10532}.
    Presentation-wise, it slightly differs from (but is equivalent to) the similar suggestion by Altenkirch and Kaposi~\cite{alt-kap:tt-in-tt}.
    Above, the components are reordered and regrouped to make the connection to CwF's more visible.
    \emph{Over} refers to \emph{substitution} (a.k.a.\ \emph{transport}) in the terminology of the HoTT book~\cite[Chp.~2.3]{hott-book}. Implicit arguments are omitted. $\UU$ is a universe at any level.} \label{fig:Ambrus-changed-reordered}
\end{figure}

Dependent type theory in the style of
Martin-L\"of 
forms the foundation of various dependently typed programming languages and proof assistants, such as Agda, Coq, Epigram, Idris, and Lean.
Numerous variations of type theory have been considered,
and models are studied in order to better understand the properties of these theories.
Even the study and formalisation of models of type theory \emph{in type theory itself}
is an active field of research, involving various different models such as the setoid model implemented by Palmgren~\cite{palmgren2019type} and others,
parametric models of type theory (Bernady et al.~\cite{BERNARDY201567}),
or an implementation of the groupoid interpretation (Sozeau and Tabareau~\cite{sozeau2014internalization}) by Hofmann and Streicher~\cite{hofmannStreicher_groupoids}.

In particular formalisations of the syntax of type theory, i.e.\ the \emph{syntactic model} or \emph{term model}, have received significant attention.
Statements of the form ``type theory should be able to handle its own meta-theory'' are often made (e.g.\ the very first sentence by Abel et al.~\cite{abel2017decidability}), which refers to formalising the (intended) \emph{initial} model of type theory.%
\footnote{It is often seen as type-theoretic folklore that the initial model of a type theory coincides with the syntax,
i.e.\ that the syntactic model is initial.
While a proof for the calculus of constructions has been known for a while (cf.~Streicher's work~\cite{Streicher93}), precise and formalised proofs for intensional Martin-L\"of type theory became available only recently (cf.\ the talks presenting work by de Boer, Brunerie, Lumsdaine, and M\"ortberg~\cite{brunerie_talk,lumsdaine_talk,lums_bru_talk_in_Stockholm,lums_bru_talk_at_hottest} as well as the licanciate thesis by de Boer~\cite{deBoer:licentiate-thesis}.}

Altenkirch and Kaposi~\cite{alt-kap:tt-in-tt} call this simply \emph{type theory in type theory}.
To avoid confusion, we refer to the type theory in which these models are implemented as the \emph{host theory}, and the structure that get implemented as the \emph{object theory} or simply the \emph{model}.
The expression ``type theory eats itself'' for this concept was suggested by Chapman~\cite{chapman2009type},
inspired by Danielsson~\cite{danielsson2006formalisation} and others.
Other recent work on the same goal with various techniques
includes the studies by Escard\'o and Xu~\cite{autophagia}, the PhD thesis by Kaposi~\cite{ambrus:thesis},
the work by 
Gylterud, Lumsdaine, and Palmgren~\cite{lums_talk_dtt_in_dtt}, and the presentation by Buchholtz~\cite{buch_talk_dtt_in_dtt}.

The closely related idea to formalise the notion of a model of dependent type theory inside dependent type theory itself goes back at least to Dybjer's internal type theory via \emph{categories with families}, often just referred to as \emph{CwF's} \cite{dybjer1995internal} (see also formulation by Awodey~\cite{awodey_natural}).
Formalisations of various models have since then been pursued many times; a careful comparison of different definitions of models inside type theory has been given by Ahrens et al.~\cite{ahrens_et_al:LIPIcs:2017:7696,lmcs:4814}.
A category with families consists of a category 
with a terminal object, a presheaf of families on it, and a context extension operation. The 
objects of the category 
are
usually denoted by $\mathit{Con}$ (``contexts'') and the morphisms by $\mathit{Sub}$ (``substitutions'' or ``context morphisms'').
Many presentations further split the presheaf into two functors $\mathit{Ty}$ and $\mathit{Tm}$, with the idea being that $\mathit{Ty}$ gives the types in a context and $\mathit{Tm}$ the terms of a type.
The terminal object represents the empty context.

CwF's can easily be presented as a \emph{generalised algebraic theory} (GAT) as introduced by Cartmell~\cite{cartmell:thesis,cartmell1986generalised}.
A GAT consists of \emph{sorts}, \emph{operations}, and \emph{equations}.
In the case of CwF's, the sorts are contexts, substitutions, types, and terms.
Examples for operations are composition of substitutions,
identity substitutions, and context extension.
The equalities include the associativity law for composition, identity laws, laws expressing that types and terms form functors, and so on.
In type theory, sorts can be represented as types and type families.
This includes a type $\Con : \UU$ for contexts and families $\Sub : \Con \to \Con \to \UU$ as well as $\Ty : \Con \to \UU$ for substitutions and types.%
\footnote{$\UU$ is a universe. We omit universe indices everywhere and implicitly use universe polymorphism.}
Operations are functions with sorts as codomains.
For example, given a type $A : \Ty \, \Delta$ and a substitution $\sigma : \Sub \, \Gamma \, \Delta$, we need an operation (often written as $\blank [ \blank ]$) which gives us a new type $A[\sigma] : \Ty \, \Gamma$.
Equalities are stated using Martin-L\"of's \emph{identity type}, also known as \emph{equality type}, \emph{path type}, or \emph{identification type}, denoted by $a = b$.%
\footnote{At this point, it is important to emphasis the difference between the internal equality type $a = b$ and the meta-theoretic \emph{judgmental equality}, also known as \emph{definitional equality}, written as $a \equiv b$.}
As an example, 
we need an equality of the form $(A[\sigma])[\tau] = A[\sigma \mc \tau]$.
A full presentation of type theory in this form, similar to the one developed by Altenkirch and Kaposi~\cite{alt-kap:tt-in-tt}, is shown in
\cref{fig:Ambrus-changed-reordered}.%
\footnote{Regarding notation: 
    In the host theory, we use Agda-style notation for dependent function types and write $(x : A) \to B \, x$ instead of $\Pi(x:A). B(x)$. Dependent pair types are denoted as $\Sigma(x:A). B(x)$.
    We reserve the symbol $\fc$ for function composition in the host theory and denote composition of substitutions (context morphisms) in the model by $\mc$.
    Implicit arguments in the host theory are denoted by $\{x : A\}$, but they are completely omitted in \cref{fig:Ambrus-changed-reordered} for readability.}
If one works in a type theory that has \emph{quotient inductive-inductive} \cite{Altenkirch2018} or \emph{higher inductive-inductive types} \cite{kaposi2019signatures}, then
defining a CwF as a GAT automatically gives a formalisation of the initial model (the intended syntax).

Besides the initial model, a very important interpretation is the \emph{standard model} (see e.g.\ the work by Altenkirch and Kaposi~\cite{alt-kap:tt-in-tt} and Coquand et al.~\cite{coquand_et_al:LIPIcs:2019:10518}), sometimes known as the \emph{meta-circular interpretation}.
It is based on the observation that the category of [small] types can be equipped with the structure of a CwF in a canonical way.
In detail, contexts and substitutions in the standard model are [small] types and functions of the host theory, 
types of the model are dependent types of the host theory, and terms are dependent functions.

We have a morphism of models from the initial model to any other model, and in particular to the standard model.
This means that a universe in the host theory is a model of type theory itself.
From a programming perspective, this is interesting as it automatically gives us an interpreter:
We formalise the syntax of type theory inside type theory, and anything we do with this syntax can be interpreted as a construction in the host theory.
As Shulman~\cite{shulman:eating} argues, we should expect that this is possible if we want to view type theory as a general purpose programming language, as any such language should be able to implement its own interpreter or compiler.
This motivates the formulation of the following somewhat vague task: 

\begin{problembox}
\noindent
\textbf{General Problem:} In a (given specific version of) dependent type theory, define the notion of \emph{model} such that both the initial model and the standard model can be constructed, and such that the initial model can reasonably be viewed as syntax.
\end{problembox}

\noindent
It seems difficult, but also slightly besides the point, to make the above \emph{General Problem} completely precise to exclude trivial solutions; a satisfactory solution is something of the category ``we know it when we see it''.
The expectation is that a \emph{model} should 
allow at least the constructions listed in \cref{fig:Ambrus-changed-reordered}, i.e.\ each component of that figure should be definable.
An (algebraic) definition of a model will give rise to the definitions of \emph{model morphism} and thus \emph{initiality} in a canonical way.
Since we don't want the initial model to be trivial, the notion of model should contain base types; these are omitted in \cref{fig:Ambrus-changed-reordered} (and in most parts of the current paper) for simplicity, but easy to add, cf.~\cite{alt-kap:tt-in-tt}.

The most interesting question is arguable what it means that the initial model can be viewed as syntax.
Note that the equality type of the host theory plays the role of definitional/judgmental equality in the (initial) model.
Therefore, we expect that the equality of the initial model is \emph{proof-irrelevant}, i.e.\ $\Con$, $\Sub$, $\Ty$, $\Tm$ are \emph{(homotopy) sets} in the terminology of homotopy type theory (but see footnote \cref{finster-footnote} on page \pageref{finster-footnote}).

If we want to model an intensional type theory (which is the primary case of interest here), it is further desirable that the equality in the initial model is \emph{decidable}, corresponding to decidable type checking.
This means for example that, for $\Gamma : \Con$, we have
\begin{equation}
(A \, B  : \Ty \, \Gamma) \to (A = B) + \neg (A = B). 
\end{equation}
If we instead model extensional Martin-L\"of type theory, we cannot expect this (cf.\ Clairambault and Dybjer's work~\cite{clairambault_dybjer_2014}), but the requirement of proof-irrelevance remains.

In intensional dependent type theory with the axiom of \emph{unique identity proofs (UIP)}, i.e.\ the assumption that \emph{all} equalities are proof-irrelevant,
and quotient inductive-inductive types~\cite{Altenkirch2018}, a solution to the problem has been given by Altenkirch and Kaposi. They construct the initial model with the standard interpretation \cite{alt-kap:tt-in-tt} and show that the initial model has decidable equality \cite{kaposi2017normalisation}.

We are interested in settings where UIP is not assumed, which makes the problem much harder.
In particular, it is a long-standing open problem whether homotopy type theory (HoTT) \cite{hott-book}, which rejects UIP, is able to handle its own meta-theory.
The question was first raised by
Shulman~\cite{shulman:eating} and has since then been discussed very actively, in particular by
Escard\'o and Xu~\cite{autophagia},
Gylterud, Lumsdaine, and Palmgren~\cite{lums_talk_dtt_in_dtt},
Buchholtz~\cite{buch_talk_dtt_in_dtt},
and Altenkirch~\cite{alt_nicolaitalk}.
Progress in the setting without UIP has also been made by Abel, \"Ohman, and Vezzosi~\cite{abel2017decidability}.
The core task, as described by the \emph{General Problem} above, is still unsolved.

If we define a CwF in homotopy type theory to consist of the components in \cref{fig:Ambrus-changed-reordered}, maybe with additional base types, universes, or $\Pi$- and $\Sigma$-types as suggested for example by Kaposi, Huber, and Sattler~\cite{kaposi_et_al:LIPIcs:2019:10532}, we have to decide how we react to the absence of UIP.
If we simply ignore it, the equality types will not be well-behaved and expected properties will not hold, a very common phenomenon in homotopy type theory.
In particular, the initial model will not have proof-irrelevant (or even decidable) equality, making it unsuitable as ``syntax'' in any form. Phrased differently, the expected quotiented term model will not be initial.
The typical strategy to address this problem is to explicitly include a condition which ensures that all involved types are truncated at a certain level (i.e.\ that equality has to be irrelevant at some level).
As a prominent example, the notion of \emph{category} considered by Ahrens, Kapulkin, and Shulman~\cite{ahrens_rezk} requires morphisms to form a set to ensure well-behavedness.
Unfortunately, this would not be a satisfactory solution in our situation.
Since univalent universes in homotopy type theory are provably not set-truncated \cite[Ex.~3.1.9]{hott-book}, any such truncatedness assumption would mean that the standard ``model'' will cease to be a model.

The fundamental underlying issue is that (ordinary) categories are not suitable to talk about non-truncated structures internally in homotopy type theory or other versions of dependent type theory without UIP.
The \emph{laws} of category theory, such as associativity or identity laws, are not appropriately modelled by the internal equality type.
In fact, equalities in a theory without UIP are \emph{data} rather than \emph{properties}.
Thus, an internal equality expressing associativity is much more similar to an isomorphism in a bicategory than a law in a (1-)category.
Being even more precise, we should say that an equality behaves like an isomorphism in an $(\infty,1)$-category (for simplicity called $\infty$-category; a survey has been given by Bergner~\cite{bergner2010survey}), as higher equalities
of any levels will be possibly non-trivial data (cf.\ the work by Sattler and the current author~\cite{krausSattler_universes}).

This motivates the current paper.
We define the notion of a model of type theory using $\infty$-categories rather than ordinary categories.
In a nutshell, doing so corresponds to adding all higher coherences which are missing from \cref{fig:Ambrus-changed-reordered}, turning it into an \emph{$\infty$-category with families}.
The main difficulty is that $\infty$-CwF's need an infinite tower of data, and this data needs to be organised appropriately. 
It is in particular unknown 
whether this structure can be represented in the version of homotopy type theory developed in the standard reference, the ``HoTT book''~\cite{hott-book}.
To circumvent this problem, we work in a \emph{two-level type theory} \cite{voe_hts,ann-cap-kra:two-level} which makes the construction possible.
This nevertheless has significant implications for ``ordinary'' homotopy type theory, where
all finite special cases \emph{can} be expressed. 
Therefore, our construction automatically gives a definition in HoTT of $(2,1)$-CwF's, $(3,1)$-CwF's, and more generally $(n,1)$-CwF's for any externally fixed natural number $n$.
This already strongly generalises ordinary CwF's, which are merely $(1,1)$-CwF's in this scheme.

The specific approach to $\infty$-categories that we work with 
is based on the \emph{semi-Segal types} suggested independently by Capriotti~\cite{paolo:thesis} and Schreiber~\cite{Schreiber2012}, further studied by the current author together with Capriotti~\cite{capKra_semisegal} as well as with Annenkov, Capriotti, and Sattler~\cite{ann-cap-kra:two-level}.
While this approach immediately gives a definition of an \emph{$\infty$-semicategory}, adding the identities in a well-behaved manner is trickier than one might expect.
All the authors of the work mentioned above use identities which have \emph{univalence} built-in.
This is natural in homotopy type theory, but not necessarily desirable when considering models since we would not expect the syntactic model to be univalent (and even if we do, previous formalisations of the syntax are not univalent and would not be captured by a formulation that requires univalence).
We therefore introduce a new definition of identities, namely \emph{idempotent equivalences}.
We prove that our identity structure is unique (i.e.\ a propositional property rather than data) and therefore automatically fully coherent.
A further propositional property expressing univalence can be added optionally on top.
In order to define $\infty$-CwF's, we naturally have to develop some notions of $\infty$-category theory inside type theory such as $\infty$-functors or the $\infty$-category of elements of an $\infty$-presheaf.

We prove that the natural constructions and examples which are broken for ordinary CwF's (based on 1-categories) work when we move to $\infty$-categories, indicating that the latter are much more well-behaved.
We also discuss three additional properties which an $\infty$-CwF can have. First, an $\infty$-CwF can be purely based on sets (i.e.\ on types with proof-irrelevant equality), such as the syntax.
Second, an $\infty$-CwF can be univalent, such as the standard model.
And third, it can be \emph{finite-dimensional}, such as the higher CwF of $n$-types.
The latter case can be formulated in homotopy type theory, without the need for 2LTT.

\subsubsection*{Related work}

The original connection between higher categories (in the form of spaces) and type theory, discovered by Awodey and Warren~\cite{awodeyWarren_HTmodelsOfIT} and by Voevodsky (see the presentation by Kapulkin and Lumsdain~\cite{kap-lum:simplicial-model}), marks the beginning of the study of homotopy type theory and univalent foundations.
While Voevodsky's model of the univalence axiom uses simplicial structures, the superficial similarity to the work in this paper may be somewhat misleading;
in fact, the motivation for the appearance of higher categories is reversed.
Voevodsky's simplicial set model uses higher categories in order to model the equality type of the type theory that \emph{is} modelled, while we are using higher categories because the \emph{host theory} is unsuitable for working with ordinary categories.
While Kapulkin and Lumsdaine~\cite{kap-lum:simplicial-model} define one particular model, the current paper defines a type (``classifier'') of models.
Similarly, Barras, Coquand, and Huber~\cite{barras_coquand_huber_2015} construct a (concrete) model interpreting types as semisimplicial sets.
Boulier~\cite{simonboulier:thesis} and other authors discuss and formalise various models assuming UIP locally or globally.
Abel, \"Ohman, and Vezzosi~\cite{abel2017decidability} formalise the syntax in a type theory
without UIP, but their induced notion of model includes non-well-typed components which are not present in a type-theoretic universe; thus, the standard interpretation is not a model, and their approach does not give a solution to the \emph{General Problem} discussed above.
Complete semi-Segal types are a known approach to univalent $\infty$-categories~\cite{paolo:thesis,ann-cap-kra:two-level,capKra_semisegal},
and the current paper uses a variation of those in order to avoid completeness/univalence.
Nguyen and Uemura~\cite{taichi_infty} propose the development of \emph{$\infty$-type theories}, where definitional equalities are replaced by homotopies; although their proposal is not worked out in full at the time of writing, we speculate that our $\infty$-CwF's are in principle general enough to also be a suitable notion of model for $\infty$-type theories.
Capriotti and Sattler~\cite{paoloChristian_segal} build an interpretation based on Segal types for a specific type theory and prove the equivalence between the initiality and induction principle for higher inductive-inductive types.

\subsubsection*{Overview of the paper}

\cref{sec:tt-in-tt} discusses the formalisation of 1-categorical models, in the form of CwF's, in a type theory with UIP.
This approach does not work anymore when UIP is dropped as we see in
\cref{sec:challenges}, where several examples that break without UIP are studied.
For our higher-categorical approach, we want to use semisimplicial types in two-level type theory, both introduced in \cref{sec:infstrucs}.
The heart of the paper is \cref{sec:highercwf}, where we develop enough internal $\infty$-category theory in order to define $\infty$-CwF's.
In \cref{sec:examples-of-iCwF}, we prove that all the examples that were broken without UIP work when using $\infty$-CwF's, and in \cref{sec:variations}, we discuss variations including univalent and finite dimensional CwF's.
In \cref{sec:openconclusions}, we discuss open problems and conjectures, and conclude.

\subsubsection*{Specification of the host type theory}

We work in dependent type theory (the ``host theory'') with $\Sigma$- and $\Pi$-types satisfying their respective judgmental $\eta$-rule, and with a hierarchy of universes.
We always assume function extensionality.
In \cref{sec:tt-in-tt}, we assume that the host theory satisfies UIP in order to discuss the standard approaches and why they fail without UIP.
From \cref{sec:challenges} on, we drop this assumption and work in homotopy type theory with all definitions and features introduced by the HoTT book~\cite{hott-book}.
In particular, we use the notions of \emph{proposition ($\mathit{-1}$-type)}, \emph{set ($\mathit{0}$-type)} and more generally \emph{$\mathit{n}$-type},
as well as the \emph{path over} notation expressing substitution/transport ($a =_p b$ means $\mathsf{subst}(p,a) = b$).

From \cref{sec:two-level} on, we assume that the host theory is equipped with a second kind of equality type, turning it into a two-level type theory.
This theory is further specified in the mentioned section.

\subsubsection*{Agda formalisation}
The discussed \emph{identities via idempotent equivalences}, which turn an $\infty$-semicategory into an $\infty$-category, are a small part of the paper but represent a core idea for the construction of $\infty$-CwF's.
It turns out that the idea and all proofs can be formulated in a very simple setting that can be formalised without having to refer to 2LTT.
This paper is supplemented by an Agda formalisation of the development.%
\footnote{Source code: \url{https://github.com/nicolaikraus/idempotentEquivalences}; clickable \texttt{html}: \url{https://nicolaikraus.github.io/docs/html-idempotentequivalences/Identities.html}
\label{footnote:agda-links}}

\section{CwF's: 1-Categorical Models of Dependent Type Theory} \label{sec:tt-in-tt}

\subsection{Motivation and categorical definition}

In order to define higher- or $\infty$-categorical internal models of type theory, we need to find a suitable one-dimensional formulation which can serve as a starting point for generalisations.
In the current section, we therefore want to work with types that do not possess non-trivial higher structure.
This means that all types in this section are assumed to satisfy the principle of \emph{unique identity proofs (UIP)}, i.e.\ are \emph{(homotopy) sets}.

In the introduction we have seen \cref{fig:Ambrus-changed-reordered} which presents the components of type theory 
as a generalised algebraic theory and is due to Kaposi and others~\cite{KapKovKra:embedding,kaposi_et_al:LIPIcs:2019:10532}.
In detail, a \emph{model of type theory} according to \cref{fig:Ambrus-changed-reordered} consists of four types and type families ($\Con$, $\Ty$, $\Sub$, $\Tm$), together with ten terms that inhabit the four type families in various ways, and together with twelve equations.
The original work by Altenkirch and Kaposi~\cite{alt-kap:tt-in-tt} (as well as \cite{kaposi_et_al:LIPIcs:2019:10532,KapKovKra:embedding}) contains additional components for various type formers. These and the twenty-six components of \cref{fig:Ambrus-changed-reordered}  are then viewed as signatures and constructors of a \emph{quotient inductive-inductive type (QIIT)} which defines the initial model (the ``syntax'').
In the current paper, we are interested in models in general, not only the initial such model. Moreover, the initial model of \cref{fig:Ambrus-changed-reordered} is rather uninteresting since we have not added base types, meaning that the initial model has only the empty context and no types.

A more compact description of a model of type theory is given by the categorical formulation of a CwF \cite{dybjer1995internal}.
Before stating its definition, let us recall the following basic categorical constructions.
Let $\mathcal D$ be a category and $F$ be a functor from $\mathcal D$ to the category $\Set$ of sets and functions.
The \emph{category of elements} is denoted by $\int_{\mathcal D} F$. 
Its objects are the pairs $\{(d,x) \mid d \in \mathcal D_0, x \in F(d)\}$ and a morphism from $(d,x)$ to $(e,y)$ is a morphism $f \in \mathcal D_1(d,e)$ such that $F(f)(x) = y$.
For $d \in \mathcal D_0$, the \emph{slice category} $\mathcal D \slash d$ has as objects those morphisms of $\mathcal D$ which have $d$ as codomain, while morphisms are those of $\mathcal D$ which make the evident triangle commute.
Finally, the functor $F$ is \emph{represented} by $d \in \mathcal D_0$ if $F$ is naturally isomorphic to $\mathcal D_1(d, \blank )$.
By the Yoneda lemma, this is equivalent to having an element $x \in F(d)$ such that $(d,x)$ is initial in $\int_{\mathcal D} F$.
We follow standard terminology and call $d$ the \emph{representing object} and $x$ \emph{universal element}.

\begin{definition}[CwF] \label{def:CwF}
 A \emph{category with families (CwF)} is given by:
 \begin{enumerate}[label=(\roman*)]
  \item a category $\CC$ with a terminal object,
  \item a presheaf $\Ty : \CCop \to \Set$; we will write $A \substT \sigma$ instead of $\Ty(\sigma)(A)$; 
  \item a functor $\Tm : \left(\int_{\CCop} \Ty\right) \to \Set$; for $t \in \Tm(\Gamma,A)$, we will write $t \substt{\sigma}$ instead of $\Tm(\sigma)(t)$,
  \item and, for every $\Delta \in \CC_0$ and $A \in \Ty(\Delta)$, an object $\Delta \ext A$ together with a morphism $p_A : \CC_1(\Delta \ext A, \Delta)$ which represents the functor 
 \begin{align}
  & (\CC \slash \Delta)^\mathsf{op} \to \Set \\
  & (\Gamma, \sigma) \mapsto \Tm (\Gamma, A \substT \sigma). \label{eq:needs-to-be-repr}
 \end{align}
 \end{enumerate}
\end{definition}

\subsection{CwF's as generalised algebraic theories} \label{subsec:GAT}

\cref{fig:Ambrus-changed-reordered} can be seen as a direct implementation of \cref{def:CwF} in type theory.
The correspondence is made clear by the sub-headlines in the figure.
Let us go through the points in \cref{def:CwF}:
 \begin{enumerate}[label=(\roman*)]
  \item The category $\CC$ in \cref{def:CwF} is the category consisting of the first two groups of data in \cref{fig:Ambrus-changed-reordered}; the reason for splitting this category in a semicategory and separate identities will become clear later.
  The terminal object is self-explanatory.
  \item The presheaf $\Ty$ is implemented in the straighforward way.
  \item For the functor $\Tm : \left(\int_{\CCop} \Ty\right) \to \Set$, 
the object part is directly given by \cref{gat:Tm'}.
The type-theoretic version of the morphism part makes use of a standard simplification:
Having two objects $(\Gamma, B)$ and $(\Delta, A)$ in $\int_{\CCop} \Ty$ together with a morphism $(\sigma : \CC_1(\Gamma, \Delta), e: B = \Ty(\sigma)(A))$ is as good as having only $\Gamma$, $\Delta$, $\sigma$, and $A$.
Thus, the morphism part of $\Tm$ can be stated as
\begin{equation}
 \begin{alignedat}{2}
  & \blank \substt{\blank } : & \quad &\{\Gamma \, \Delta : \Con\} \to \{A : \Ty(\Delta)\} \to \\
  &&& (\sigma : \Sub \, \Gamma \, \Delta) \to \Tm \, \Delta \, A \to \Tm \, \Gamma \, (A \substT \sigma)
 \end{alignedat}
\end{equation}
The version \eqref{gat:subtm'} omits the implicit arguments and swaps the explicit arguments.
  \item Given $\Delta$ and $A$, the pair $(\Delta \ext A, p)$ is the representing object, while $q$ is the universal element.
  Initiality of the object $(\Delta \ext A, p, q)$ in the category of elements of $\Tm$ translates in type theory to:
  For all $\sigma : \Sub \, \Gamma \, \Delta$ and $t : \Tm \, \Gamma \, (A \substT{\sigma})$, the type
  \begin{equation} \label{eq:contr-represent}
   \Sigma (\gamma : \Sub \, \Gamma \, (\Delta \ext A)). \Sigma (e : \sigma = p \mc \gamma). (q\substt{\sigma} =_e t)
  \end{equation}
  is contractible (i.e.\ has a unique inhabitant). 
  The inhabitant (``centre of contraction'') is given by $\gamma :\equiv (\sigma,t)$ together with the equations $\ext\upbeta_1$ and $\ext\upbeta_2$.
  Having shown that there is a suitable $\gamma$, we need to check that it is the only one; thus, assume we have $\gamma'$ satisfying the equations.
  It then follows that
  \begin{equation}
  \begin{alignedat}{12}
   &&&&& \gamma' \\
   &\textit{by $\mathsf{(idl)}$ and $(\ext\upeta)$} &\qquad & = & \quad & \gamma' \mc (p, q) \\
   &\textit{by $(\comma\scomp)$} && = && (p \mc \gamma' , q \substt {\gamma'}) \\
   &\textit{by assumption} && = && (\sigma , t).
  \end{alignedat}
  \end{equation}
  Thanks to the assumption that all involved types satisfy UIP, the equations are automatically unique.
  Vice versa, from contractibility of \eqref{eq:contr-represent}, one can derive the components $(\comma\scomp)$ and $(\ext\upeta)$.
 \end{enumerate}
 
\subsection{Examples} \label{sec:examples-of-cwfs}
 
The literature covers many examples for CwF's.
For us, the following standard and well-known examples are particularly interesting:

\begin{example}[syntax/initial model] \label{ex:initial}
 In a version of type theory with \emph{quotient inductive-inductive types (QIITs)} \cite{Altenkirch2018,kaposi_et_al:LIPIcs:2018:9190},
 we can read \cref{fig:Ambrus-changed-reordered} directly as a QIIT signature.
 There is an implied notion of model morphism, and the QIIT is automatically the initial such model.
 This is how the construction is presented by Altenkirch and Kaposi~\cite{alt-kap:tt-in-tt},
 and we refer to their construction as the \emph{syntax QIIT}.
 The initial CwF without base types consists of only the empty context, but they demonstrate that it is easy to add further components:
 $\mathsf{Unit}$, $\mathsf{Bool}$, $\Sigma$- and $\Pi$-types as well as a ``universe''
 $(\mathsf{U}, \mathsf{El})$.

 Kaposi and Altenkirch~\cite{kaposi2017normalisation} further show that the syntax QIIT has decidable equality.
 This is important if we want to view the initial model as formalised syntax, and from this point of view, the internal equality type plays the role of the meta-theoretic definitional equality, which in most type theories is required to be decidable.
\end{example}

\begin{example}[standard model] \label{ex:stdmodel}
 The \emph{standard model} is the CwF of types and functions: each expression in \cref{fig:Ambrus-changed-reordered} is interpreted by its canonical ``semantic counterpart''.
 We need a fixed [small] universe $\UU$ to define this internally.
 Then, the standard model is given by:

\noindent
\begin{minipage}[t]{.5\textwidth -.25cm}
 \begin{alignat}{3}
  & \Con \defeq \UU \\
  & \Sub \, \Gamma \, \Delta \defeq \Gamma \to \Delta  \\
  & \delta \mc \sigma \defeq \delta \fc \sigma \\
  & \id \defeq \lambda x.x\\
  & \emptycon \defeq \unit \\
  & \Ty \, \Gamma \defeq \Gamma \to \UU\\
  & A \substT \sigma \defeq A \fc \sigma 
 \end{alignat}
 \hfill
\end{minipage}
\begin{minipage}[t]{.5\textwidth -.25cm}
    \begin{alignat}{3}
    & \Tm \, \Gamma \, A  \defeq \Pi(x : \Gamma).(A \, x) \\
    & t \substt \sigma \defeq t \fc \sigma \\
    & \Gamma \ext A \defeq \Sigma (x : \Gamma).(A \, x) \\
    & \p \defeq \mathsf{proj}_1 \\
    & \q \defeq \mathsf{proj}_2 \\
    & (\sigma, t) \defeq \lambda x. (\sigma \, x , t \, x)\\
    & \ldots \nonumber
    \end{alignat}
\end{minipage}

\noindent
 Several authors (see e.g.\ \cite{alt-kap:tt-in-tt,autophagia}) have noticed that all equations in this model hold definitionally (i.e.\ by $\refl$): function composition is definitionally associative (assuming $\eta$ for $\Pi$-types), and so on.
 This will play a role later in the current paper.
\end{example}

\begin{example}[modelling an axiom] \label{ex:axmodel}
 Assume we are given a set-CwF $\CC$.
 This model may have additional types and type formers such as $\Pi$-types and universes.
 We now may want a model where a global assumption is satisfied, i.e.\ the axiom of function extensionality.
 Such a model can be created by fixing the context $\Gamma_0$ to contain a term which witnesses function extensionality, and constructing the slice $\CC \slash \Gamma_0$.
 As usual, contexts of $\CC \slash \Gamma_0$ are pairs $(\Delta : \Con, \delta : \Sub \, \Delta \, \Gamma_0)$, and a morphism between $(\Delta, \delta)$ and $(\Phi, \varphi)$ is a pair $(f : \Sub \, \Delta \, \Phi, e : \delta = \varphi \mc f)$.
 The other components of the new model are those given by $\CC$.
\end{example}

\begin{takeaway}
\noindent
\textbf{Section summary.}
Dybjer's notion of a model of type theory, a \emph{category with families (CwF)}, can be represented as a generalised algebraic theory and implemented in dependent type theory.
The most important examples of models are the \emph{initial model}, which one expects to coincide with an internal representation of the syntax, and the \emph{standard model} of types and functions, which provides the ``obvious semantics''. Both models have been studied and work well in a type theory with UIP.
\end{takeaway}

\section{Challenges in Type Theory without UIP} \label{sec:challenges}
 
As discussed in the previous section, there are clear and extensively studied strategies for the development of models of type theory via CwF's in a dependent type theory which satisfies the principle of unique identity proofs (UIP).

The purpose of the section is to demonstrate that $1$-categorical methods may be unsatisfactory in a setting without UIP.
The underlying question is: \emph{What is a good notion of a category?} 
We can add a truncation condition explicitly and accept that this may exclude cases that of interest, as it is done e.g.\ by Ahrens, Kapulkin, and Shulman~\cite{ahrens_rezk}.
If we choose not to add such a condition, the types may behave differently than expected.

In this section, we assume that we work in the version of homotopy type theory that is developed in the HoTT book~\cite{hott-book}.
In principle, the examples that we give apply in any version of dependent type theory which does not (or not necessarily) validate the principle UIP.

\subsection{Deficiency of Truncated Structure}

In order to reproduce the constructions from \cref{sec:tt-in-tt} in HoTT or other dependent type theories without UIP, we could simply modify the definition of a model (i.e.\ of a CwF) by adding the requirement that the involved types and type families 
are sets anyway:

\begin{definition}[set-CwF] \label{def:setCwF}
 A \emph{set-CwF} is a 30-tuple $(\Con, \Sub, \ldots)$ of the 26 components given in \cref{fig:Ambrus-changed-reordered}, together with four components:
 \begin{alignat}{2}
  & \mathsf{conset} : &\;\; & \isset(\Con) \\
  & \mathsf{tyset} : && (\Gamma : \Con) \to \isset(\Ty \, \Gamma) \\
  & \mathsf{subset} : && (\Gamma \, \Delta : \Con) \to \isset(\Sub \, \Gamma \, \Delta) \\
  & \mathsf{tmset} : && (\Gamma : \Con) \to (A : \Ty \, \Gamma) \to \isset(\Tm \, \Gamma \, A)
 \end{alignat}
\end{definition}

 \noindent
Formally, a set-CwF is an element of an iterated $\Sigma$-type, or, if supported by the theory, an element of a record type with 30 entries.

As long as we only care about the ``sub-theory of sets'', we can do everything with set-CwF's that we could do in the setting with UIP.
In particular, assuming the type theory supports QIITs, the initial set-CwF can still be defined via Altenkirch and Kaposi's \emph{syntax-QIIT} (c.f.\ \cref{ex:initial}).
The main drawback is the observation that the \emph{standard model} will no longer work:

\begin{example}[standard model; continuing \cref{ex:stdmodel}] \label{ex:stdmodel2}
 In HoTT, the [lowest or higher] universe $\UU$ is not a set. 
 Thus, the standard model presented in \cref{ex:stdmodel} is not a set-CwF.
\end{example}

A very weak version of the standard model is still possible:
We can define a ``universe of propositions'' as $\Prop \defeq \Sigma(X : \UU). \isprop(X)$. Since this is a set, it can be used to build a model as in \cref{ex:stdmodel}.
However, as explained in the introduction, the non-existence of the general ``standard model'' is unsatisfying.
One possible view is that, 
unlike many other programming languages, HoTT cannot serve as its own interpreter with this approach.

The situation does not become better if we work with \emph{univalent} $1$-categories \cite{ahrens_rezk} instead, where substitutions have to form a set and contexts a $1$-type.

\subsection{Deficiency of Wild/Incoherent Structure}

Another approach would be to ``ignore'' the fact that we do not have UIP, and simply use the definition which works well in a setting with UIP.

\begin{definition}[wild CwF] \label{def:wildCwF}
 A \emph{wild CwF} is a 26-tuple $(\Con, \Sub, \ldots)$ of all the components in \cref{fig:Ambrus-changed-reordered}.
 In other words, a wild CwF is a set-CwF without the four components in \cref{def:setCwF}.
\end{definition}

The universe forms a wild CwF:
the construction in \cref{ex:stdmodel} works word for word, since the assumption of UIP was never used.
Unfortunately, the absence of UIP breaks the other constructions that we have discussed in \cref{sec:examples-of-cwfs}.
Before we can see this, we record the following observation.
Making some of the ``implicit arguments'' precise, both $\idl_{\id(\Gamma)}$ and
$\idr_{\id(\Gamma)}$ are of type $\id(\Gamma) \mc \id(\Gamma) = \id(\Gamma)$.
The definition of a bicategory would include an equality ensuring these coincide, but this is not the case for wild CwF's:

\begin{lemma} \label{lem:nonset-wild-CwF}
    In homotopy type theory, we can construct a wild CwF $(\Con, \Sub, \ldots)$ with a context $\Gamma : \Con$ such that $\idl_{\id(\Gamma)} \neq \idr_{\id(\Gamma)}$.
\end{lemma}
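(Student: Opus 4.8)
The plan is to exhibit a small, explicit wild CwF rather than to perturb the universe-based standard model of \cref{ex:stdmodel2}. Perturbing the standard model is tempting—one would keep all operations and only replace the proof terms $\idl$ and $\idr$, which is legal since that model validates every equation by $\refl$—but in it $\sigma \mc \id \equiv \sigma$ holds judgmentally, so $\idr$ would have to be a \emph{uniform} family of loops $\prod_{\Gamma\,\Delta:\UU}\prod_{\sigma:\Gamma\to\Delta}(\sigma = \sigma)$, and there is no way to define one that is nontrivial at a single type such as $\Scircle$ while ranging over the entire universe. This uniformity is exactly the coherence a counterexample must break, so I would confine the incoherence to a model with two objects. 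First I would record a reduction: to get a wild CwF with $\idl_{\id(\Gamma)} \ne \idr_{\id(\Gamma)}$ it suffices to build a wild \emph{category}—a semicategory with identities and a terminal object, i.e.\ the first three groups of \cref{fig:Ambrus-changed-reordered}—with this property, since any such wild category extends freely to a wild CwF by setting $\Ty\,\Gamma \defeq \unit$, $\Tm\,\Gamma\,A \defeq \unit$, $\Gamma \ext A \defeq \Gamma$, $\p \defeq \id$ and $\q \defeq \tt$. Each remaining type-, term-, and comprehension-law then either lives in a contractible type (so holds uniquely) or reduces to $\idl$ or $\refl$, and none of the added components mentions $\idr$, so its value stays entirely free.

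For the wild category I would take $\Con \defeq \bool$ with its two elements renamed $\ast$ (the terminal object $\emptycon$) and $\bullet$, and set $\Sub(\ast,\ast) \defeq \unit$, $\Sub(\bullet,\ast) \defeq \unit$, $\Sub(\ast,\bullet) \defeq \emptyset$ and $\Sub(\bullet,\bullet) \defeq \Scircle$, with $\id(\ast) \defeq \tt$ and $\id(\bullet) \defeq \mathsf{base}$. Composition on the endomorphisms of $\bullet$ is the standard H-space multiplication $\mu$ of the circle, defined by circle recursion with $\mu(\mathsf{base}) \defeq \lambda y.y$, so that it has a \emph{strict} left unit $\mu(\mathsf{base},\sigma) \equiv \sigma$; every other composite either lands in $\unit$ or is vacuous, because $\Sub(\ast,\bullet)$ is empty. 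Associativity then reduces to homotopy-associativity of $\mu$—its only non-vacuous instance has all objects equal to $\bullet$—and is otherwise automatic by proof-irrelevance of $\unit$, while terminality of $\ast$ and $\emptycon\upeta$ are immediate.

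The crux is the choice of unit laws at $\bullet$. The strict left unit gives $\idl_\sigma \defeq \refl$, so $\idl_{\id(\bullet)} = \refl$. For the right unit I would take the standard law $\rho_\sigma : \mu(\sigma,\mathsf{base}) = \sigma$ and twist it pointwise, defining $\idr_\sigma \defeq \rho_\sigma \ct h(\sigma)$, where $h : \prod_{x:\Scircle}(x = x)$ is the family with $h(\mathsf{base}) = \Sloop$; this $h$ exists by circle induction, since $\Sloop$ commutes with itself in the abelian group $\Omega\Scircle$. The twisted $\idr$ is again a valid right unit law, and since the standard law satisfies $\rho_{\mathsf{base}} = \refl$ we obtain $\idr_{\id(\bullet)} = \rho_{\mathsf{base}} \ct \Sloop = \Sloop$. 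As $\Sloop \ne \refl$, this gives $\idl_{\id(\bullet)} = \refl \ne \Sloop = \idr_{\id(\bullet)}$, with $\Gamma \defeq \bullet$ the required context.

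The inequality itself is nothing more than $\Sloop \ne \refl$; the main obstacle is the bookkeeping that certifies the package as a genuine wild CwF. One must check that every component of \cref{fig:Ambrus-changed-reordered} still typechecks after the twist and, crucially, that the signature imposes \emph{no} coherence relating $\idl$ and $\idr$—this absence of coherence is precisely the ``wildness'' that the lemma is meant to expose. The only genuinely nontrivial ingredients are the circle's H-space multiplication with its associativity proof and the loop family $h$, all of which are standard.
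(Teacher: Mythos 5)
Your proof is correct, but it takes a genuinely different route from the paper's. The paper does essentially the thing you dismiss at the outset---it perturbs the standard model---but it first freely adjoins one new context, setting $\Con \defeq \UU + \unit$ with the new object $(\inr\,\star)$ behaving as a copy of $\Scircle$ and all hom-types given by function types. This sidesteps your (correct) objection that a nontrivial unit law cannot be chosen uniformly over the whole universe: the twist $\idl_\sigma \defeq \mathsf{funext}(\lambda x.\,\ell(\sigma(x)))$ is applied only to substitutions whose codomain is the distinguished new context, where that codomain is known to be $\Scircle$ and the nontrivial self-identification family $\ell$ (your $h$) is available. Your construction instead builds a minimal two-object wild category with hom-type $\Scircle$ at the nontrivial object and feeds it through a reduction lemma (trivial $\Ty$, $\Tm$, and context extension, with $\ext\upbeta_1$ discharged by $\idl$ and $\idr$ left unconstrained), which is sound and arguably more economical for the lemma as literally stated. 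Two trade-offs are worth noting. First, by taking composition to be the circle's H-space multiplication $\mu$ you incur the obligation of proving homotopy-associativity of $\mu$---true, and provable by circle induction into a family of sets since $\Scircle$ is a $1$-type, but not free---whereas in the paper's model composition is function composition and every law other than the twisted one is $\refl$. Second, the paper reuses this particular model in \cref{ex:initial2}: there the distinguished context must lie in the image of the map from the syntax QIIT, which works because it arises as the extension of the empty context by a base type interpreted as $\Scircle$. Your model, having $\Ty\,\Gamma \defeq \unit$ and $\Gamma \ext A \defeq \Gamma$, proves \cref{lem:nonset-wild-CwF} but would not directly support that follow-up argument.
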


\begin{proof}
    The wild CwF is given by slightly modifying the standard model (\cref{ex:stdmodel}).
    Let $\Con$ be the coproduct $\UU + \unit$, with the intuition that the single new context $(\inr \, \star)$ is a copy of the circle type $\Scircle$, the type with a non-trivial equality $\Sloop$.
    Thus, we extend all components from \cref{ex:stdmodel} accordingly,
    \begin{alignat}{3}
    & \Sub \, (\inl \, X) \, (\inr \, \star) &\; &\defeq  &\quad & X \to \Scircle \\
    & \Sub \, (\inr \, \star) \, (\inl \, Y) &&\defeq  && \Scircle \to Y \\
    & \Sub \, (\inr \, \star) \, (\inr \, \star) &&\defeq  && \Scircle \to \Scircle \\
    & \Ty \, (\inr \, \star) && \defeq && \Scircle \to \UU
    \end{alignat}    
    and so on.
    We choose all equalities to be constantly $\refl$ as in \cref{ex:stdmodel} apart from $\idl$, which we set as follows:
    \begin{itemize}
        \item for $\sigma : \Sub \, \Gamma \, (\inl \, X)$, we set $\idl_{\sigma} \defeq \refl$;
        \item for $\sigma : \Sub \, \Gamma \, (\inr \, \star)$, we set
        $\idl_{\sigma} \defeq \mathsf{funext}(\lambda x. \ell(\sigma(x)))$, where $\ell$ is the function that gives us a non-trivial auto-equality for every element of the circle \cite[Lem 6.4.2]{hott-book}.
    \end{itemize}
    Then, we have $\idl_{\id(\inr \, \star)} \neq \idr_{\id(\inr \, \star)}$ as required.  
\end{proof}

This allows us to see that the initial wild CwF does not behave as expected:

\begin{example}[initial model; continuing \cref{ex:initial}] \label{ex:initial2}
 In order for the initial model to be non-trivial, let us assume that we add a base type or a family of base types to the specification of a wild CwF as discussed in \cref{ex:initial}.
 Altenkirch and Kaposi's \emph{syntax QIIT} is a wild CwF and remains a valid example.
 However, it is not the \emph{initial} wild CwF:
 We can take the distinguished version of $\Scircle$ in the construction of \cref{lem:nonset-wild-CwF} as a base type (or family of base types),
 and if the \emph{syntax QIIT} was initial, it would come with a map to this modified standard model.
 As the distinguished context would be in the image of this map,
 the syntax QIIT would have the property
 $\idl_{\id} \neq \idr_{\id}$, contradicting the fact that the syntax QIIT is a set-CwF. 
 
 Assuming we work in homotopy type theory with higher inductive-inductive types (the not-necessarily-truncated version of QIITs), with the rules suggested by Kaposi and Kov\'acs~\cite{kaposi_et_al:LIPIcs:2018:9190}, 
 we can form the \emph{syntax HIIT} by reading the specification as a HIIT signature.
 This gives us the initial wild CwF, but it is not a set and,
 by Hedberg's theorem~\cite{hedberg1998coherence}, does in particular not have decidable equality.
\end{example}

\begin{example}[{modelling an axiom; continuing \cref{ex:axmodel}}] \label{ex:axmodel2}
    The construction of the \emph{slice CwF} is broken for wild CwF's. 
    To be precise, the slice construction is already broken for wild semicategories before even considering identity morphisms, types, terms, or context extension.
    In a nutshell, when doing the slice construction of a (higher) category, a component at ``level'' $n$ needs components of the original category at level $n$ \emph{and} $n+1$: slice-objects need objects and morphisms, slice-morphisms need morphisms and composition, slice-composition needs composition and associativity, slice-associativity needs associativity and the coherence known as \emph{Mac Lane's pentagon} for bicategories.
    In \cref{ex:axmodel}, UIP makes this coherence automatic. Here, it is simply not possible to derive associativity.
\end{example}

The two issues that we have identified in \cref{ex:initial2,ex:axmodel2} are instances of the same underlying problem, namely data that is missing from the definition of a wild CwF.
In the first case, we need one datum that gives us $\idl(\id) = \idr(\id)$, and in the second case, we need a datum giving us the ``pentagon coherence''.
Unsurprisingly, na\"ively adding these to the definition of a wild CwF creates the need to add even more data, and so on.
The approach advocated in this paper, i.e.\ the use of $\infty$-categories, amounts to adding \emph{all} this data.

\begin{takeaway}
\noindent
\textbf{Section summary.}
When implementing models of type theory inside homotopy type theory using CwF's, one has to choose whether or not to impose a condition on the truncation levels of the involved types.
Both possibilities can be unsatisfactory.
\end{takeaway}

\section{Infinite Structures in Type Theory} \label{sec:infstrucs}

\cref{def:setCwF} defines a wild CwF to be a tuple with 26 components.
Even more extreme, \cref{def:wildCwF} defines a set-CwF to consist of 30 components.
While this is already tedious to express in type theory, it is entirely straightforward to express such a definition simply by listing the components.
Formally, a CwF is an element of a nested $\Sigma$-type or, if supported by the type theory, a record type (we view records as syntactic sugar for nested $\Sigma$-types).

However, what happens if we want to define a structure to consist of not only 30, but infinitely many components?
Often, this can be encoded in a finite way. To give a trivial example, an infinite sequence of elements of a type $A$ is simply given as the function type $\mathbb N \to A$ (or as a coinductive type of streams, although such coinductive definitions can be translated to homotopy type theory without coinduction \cite{paolo_nonwellfounded}).

A non-trivial example for a structure with infinitely many components is a \emph{semisimplicial type} \cite{UnivalentFoundationsProgram2013,Kraus:theroleofSST}, and it is a long-standing open problem whether a type capturing this structure can be defined in homotopy type theory.
Nevertheless, 
this structure is important for the approach to $\infty$-categories that we chose in this paper.
In this section, we first define what semisimplicial types are and then introduce the setting of \emph{two-level type theory (2LTT)}, an extension of HoTT which allows the treatment of certain structures with infinitely many components.

\subsection{Semisimplicial Types} \label{sec:sstypes}

A semisimplicial type of level $2$ is a tuple $(A_0, A_1, A_2)$ of types and type families as follows:
\begin{alignat}{3}
 & A_0 &&: &\; & \UU \label{eq:sst1} \\
 & A_1 &&: && A_0 \to A_0 \to \UU  \label{eq:sst2} \\
 & A_2 &&: && \{x_0 \, x_1 \, x_2 : A_0\} \to (A_1 \, x_0 \, x_1) \to (A_1 \, x_1 \, x_2) \to (A_1 \, x_0 \, x_2) \to \UU  \label{eq:sst3} 
\end{alignat}
We think of $A_0$ as a type of \emph{points} or \emph{vertexes}.
For two vertexes $x_0$ and $x_1$ we think of $A_1 \, x_0 \, x_1$ as a type of \emph{lines} between the vertexes.
For three vertexes and three lines forming a triangle, $A_2$ is the type of triangle fillers.

A semisimplicial type of level $3$ is a tuple $(A_0, A_1, A_2, A_3)$ with $A_0, A_1, A_2$ as above, and $A_3$ being a family of types that is indexed over four vertexes, six lines, and four triangle fillers which form the boundary of a tetrahedron.
Although harder to imagine geometrically, it is clear on an intuitive level how the next step would look like: a semisimplicial type of level $4$ adds a component $A_4$, which is a type family indexed over the boundary of a $4$-dimensional tetrahedron, consisting of five vertexes, ten lines, ten triangle fillers, and five tetrahedron fillers.

The mentioned long-standing open problem of homotopy type theory asks whether it is possible to \emph{define} semisimplicial types internally, i.e.\ whether it is possible to write down a type family $F : \mathbb N \to \UU_1$ (where $\UU_1$ is a universe that contains $\UU$) such that $F(n)$ encodes the type of tuples $(A_0, \ldots, A_n)$.
The problem was first discussed between Voevodsky, Lumsdaine, and others during the Univalent Foundations special year program at the IAS in Princeton 2012/13 and has since then acquired significant attention. Proposals for partial solutions, or extensions of type theory where solutions are possible, have been suggested by Voevodsky~\cite{voe_hts}, Herbelin~\cite{herbelin_semisimpl}, Part and Luo~\cite{DBLP:journals/corr/PartL15}, Altenkirch, Capriotti, and the current author~\cite{altCapKra_twoLevels}, and others.

Meta-theoretically, a semisimplicial type represents a certain contravariant functor from the category $\Delta_+$ to the category of (small) types.
This works as follows.
For a universe $\UU$, the category of types in $\UU$ has types as objects and functions as morphisms.
The categorical laws hold judgmentally/definitionally due to the $\eta$-law for $\Pi$-types.
Overloading notation, we denote this category by $\UU$.
The category $\Delta_+$ has natural numbers as objects. For a number $k$, we write $[k]$ for the finite set of numbers $\{0, 1, \ldots, k\}$.
Morphisms of $\Delta_+$ are strictly monotone functions,
\begin{equation}
 \Delta_+(i,j) \defeq \{\, f : [i] \to [j] \mid f \text{  strictly increasing}\, \}.
\end{equation}
We denote the full subcategory of $\Delta_+$ with objects $[0]$, $[1]$, $[2]$ by $\Delta_+^{\leq 2}$.
Drawing only generating morphisms, this category can be depicted as
\begin{equation} \label{eq:delta2}
 \begin{tikzpicture}[x=1.5cm,y=2cm,baseline=(current bounding box.center)]
  \node (V) at (0,0) {$[0]$};
  \node (E) at (1,0) {$[1]$};
  \node (F) at (2,0) {$[2]$};
  \draw[<-, transform canvas={yshift=+0.4ex}] (E) to node[above] {} (V);
  \draw[<-, transform canvas={yshift=-0.4ex}] (E) to node[below] {} (V);
  \draw[<-, transform canvas={yshift=+0.8ex}] (F) to node[above] {} (E);
  \draw[<-, transform canvas={yshift=+0ex}] (F) to node[] {} (E);
  \draw[<-, transform canvas={yshift=-0.8ex}] (F) to node[below] {} (E);
 \end{tikzpicture}
\end{equation}
Given a semisimplicial type of level $2$ as above, the functor $\left(\Delta_+^{\leq 2}\right)^{\mathsf{op}} \functor \UU$ that it represents is given by
\begin{equation} \label{eq:sst-as-functor}
 \begin{tikzpicture}[x=1.5cm,y=2cm,baseline=(current bounding box.center)]
  \node (V) at (0,0) [] {$A_0$};
  \node (E) at (2.5,0) [] {$\Sigma(x_0 \, x_1 : A_0). A_1 \, x_0 \, x_1$};
  \node (F) at (5.9,0) [align=left] 
    {$\Sigma(x_0 \, x_1 \, x_2 : A_0).$ \\ 
     $\Sigma (x_{01} : A_1 \, x_0 \, x_1).$ \\
     $\Sigma (x_{12} : A_1 \, x_1 \, x_2).$ \\
     $\Sigma (x_{02} : A_1 \, x_0 \, x_2).$ \\
     $\phantom{\Sigma (} A_2 \, x_{01} \, x_{12} \, x_{02}$
    };
  \draw[->, transform canvas={yshift=+0.4ex}, shorten >=0.2cm,shorten <=.2cm] (E) to node[above] {} (V);
  \draw[->, transform canvas={yshift=-0.4ex}, shorten >=0.2cm,shorten <=.2cm] (E) to node[below] {} (V);
  \draw[->, transform canvas={yshift=+0.8ex}, shorten >=0.2cm,shorten <=.2cm] (F) to node[above] {} (E);
  \draw[->, transform canvas={yshift=+0ex}, shorten >=0.2cm,shorten <=.2cm] (F) to node[] {} (E);
  \draw[->, transform canvas={yshift=-0.8ex}, shorten >=0.2cm,shorten <=.2cm] (F) to node[below] {} (E);
 \end{tikzpicture}
\end{equation}
The function between the types are simply projections, ensuring that the functor laws hold definitionally (thanks to $\eta$ for $\Sigma$-types).

Given a semisimplicial type $A$, we can consider an \emph{$A$-indexed family} of semisimplicial types.
With the view of a simplicial type as a functor, we can also call this a semisimplicial type $E$ \emph{over} $A$.
Concretely, if we have $A \equiv (A_0, A_1, A_2)$ as above, a semisimplicial type $E$ of level 2 over $A$ consists of:

\begin{alignat}{3}
& E_0 &&: &\; & A_0 \to \UU \label{eq:E-over-A-1} \\
& E_1 &&: && \{x_0 \, x_1 : A_0\} \to (x_{01} : A_1 \, x_0 \, x_1) \to (x_0' : E_0 \, x_0) \to (x_1' : E_1 \, x_1) \to \UU \label{eq:E-over-A-2}\\
& E_2 &&: && \{x_0 \, x_1 \, x_2 : A_0\} \to \{x_{01} : A_1 \, x_0 \, x_1\} \to \{x_{12} : A_1 \, x_1 \, x_2\} \to \{x_{02} : A_1 \, x_0 \, x_2\} \to \notag \\
&&&&& (x_{012} : A_2 \, x_{01} \, x_{12} \, x_{02}) \to \{x_0' : E_0 \, x_0\} \to \{x_1' : E_0 \, x_1\} \to \{x_2' : E_0 \, x_2\} \to \label{eq:E-over-A-3} \\
&&&&& 
(E_1 \, x_{01} \, x_0' \, x_1') \to 
(E_1 \, x_{12} \, x_1' \, x_2') \to 
(E_1 \, x_{02} \, x_0' \, x_2') \to \UU \notag 
\end{alignat}
Analogously to \eqref{eq:delta2}, the tuple $(A_0,A_1,A_2,E_0,E_1,E_2)$ can be viewed as a type-valued presheaf on the category
\begin{equation}
\begin{tikzpicture}[x=1.5cm,y=2cm,baseline=(current bounding box.center)]
\node (V) at (0,0) {$[0]$};
\node (E) at (1,0) {$[1]$};
\node (F) at (2,0) {$[2]$};
\draw[<-, transform canvas={yshift=+0.4ex}] (E) to node[above] {} (V);
\draw[<-, transform canvas={yshift=-0.4ex}] (E) to node[below] {} (V);
\draw[<-, transform canvas={yshift=+0.8ex}] (F) to node[above] {} (E);
\draw[<-, transform canvas={yshift=+0ex}] (F) to node[] {} (E);
\draw[<-, transform canvas={yshift=-0.8ex}] (F) to node[below] {} (E);

\node (V2) at (0,.7) {$[0']$};
\node (E2) at (1,.7) {$[1']$};
\node (F2) at (2,.7) {$[2']$};
\draw[<-, transform canvas={yshift=+0.4ex}] (E2) to node[above] {} (V2);
\draw[<-, transform canvas={yshift=-0.4ex}] (E2) to node[below] {} (V2);
\draw[<-, transform canvas={yshift=+0.8ex}] (F2) to node[above] {} (E2);
\draw[<-, transform canvas={yshift=+0ex}] (F2) to node[] {} (E2);
\draw[<-, transform canvas={yshift=-0.8ex}] (F2) to node[below] {} (E2);

\draw[<-] (V2) to node {} (V);
\draw[<-] (E2) to node {} (E);
\draw[<-] (F2) to node {} (F);
\end{tikzpicture}
\end{equation}

This presentation of type-theoretic structure as type families described by a \emph{one-way} or \emph{inverse} category is of course well-known since at least Makkai's FOLDS \cite{makkai:folds}.
For the theory of model categories, a similar concept has already been studied by Reedy~\cite{reedy:reedy}.
The theory of \emph{Reedy fibrant diagrams} in type theory have been developed by Shulman~\cite{shulman_inversediagrams}.
For the specific case of $\deltop$ (semisimplicial types), 
the current author has studied the relevance of the concept for homotopy type theory in earlier work%
~\cite{kraus_generaluniversalproperty,nicolai:thesis}.

\subsection{Two-Level Type Theory} \label{sec:two-level}

Voevodsky's \emph{Homotopy Type System (HTS)} \cite{voe_hts} 
is a version of HoTT which makes it possible to make certain meta-theoretical statements in the language of type theory.
\emph{Two-Level Type Theory (2LTT)}%
~\cite{paolo:thesis,altCapKra_twoLevels,ann-cap-kra:two-level}
develops the idea of HTS further and provides a more general framework.
2LTT allows us to work with infinite structures such as semisimplicial types.
The same can be done in other frameworks such as Shulman's inverse diagrams \cite{shulman_inversediagrams}, which would be completely suitable for the work of this paper as well; and the translation of the current paper to that setting is, in fact, straightforward.
For concreteness and to avoid leaving the type-theoretical setting, we nevertheless choose to work in 2LTT in the current paper.

The fundamental idea of HTS and 2LTT is very simple.
They contain a new type former, written $\steq$, which expresses \emph{strict equality}.%
\footnote{Similar as other authors~\cite{altCapKra_twoLevels,benedikt_HSIP} (but unlike~\cite{ann-cap-kra:two-level}), we leave the fibrant components unannotated and annotate the strict components instead. For simplicity, we do not distinguish between \emph{inner} and \emph{fibrant}; this can be achieved by assuming the axiom (T3) 
in the presentation by with Annenkov, Capriotti, Sattler, and the current author~\cite{ann-cap-kra:two-level}.}
Strict equality can be viewed as an ``internalised'' version of judgmental equality.
In fact, one can assume that $\steq$ satisfies equality reflection and thus $\steq$ and $\equiv$ coincide; this is an axiom which is part of HTS, and it has been shown that this assumption can be made without destroying the good property [conservativity] of 2LTT \cite{ann-cap-kra:two-level}.

However, it is not as simple as adding $\steq$ with all the usual rules for equality.
This would make $\steq$ and $=$ coincide, which is undesirable and incompatible with HoTT.
To resolve this issue, only some types of the theory will come with the equality type $=$, and these types are called \emph{fibrant}.
In other words, there are two ``layers'', or ``levels'', of types: fibrant types and not-necessarily-fibrant a.k.a.\ \emph{strict} types.

Fibrant types are the ones which are of interest and which correspond to types in HoTT.
Non-fibrant types can be thought of as auxiliary components which make the language more expressive, but which are of little interest themselves since they might not even exist in HoTT.
If $A$ is a fibrant type and $x,y : A$ elements, then $(x=y)$ is a fibrant type while $(x \steq y)$ is (in general) not: the first is a perfectly fine type of HoTT, while the second is something that cannot be expressed in HoTT.
In contrast, $\Sigma (x' : A). x' \steq x$ \emph{is} fibrant, since it is strictly isomorphic to the (fibrant) unit type; i.e.\ this exists in HoTT up to strict isomorphism.
Similar as for equality, 2LTT has two types of natural numbers: First, the standard type $\N$ of fibrant natural numbers which one works with in type theory; and second, the type $\sN$ of \emph{strict} natural numbers. Similarly, we have fibrant finite types $\Fin \, n$ (for $n : \N$) and strict finite types $\Fin^s \, n$ (for $n : \sN$).
HTS assumes that $\N$ and $\sN$ coincide, which is justified by the model in simplicial sets \cite{kap-lum:simplicial-model}.

2LTT allows us to make the description of \cref{sec:sstypes} of a semisimplicial type as a functor precise.
We briefly recall the construction of~\cite{ann-cap-kra:two-level}.
Using strict equality, we can define all the standard categorical concepts.
\begin{definition}[strict categories and their theory]
  A \emph{strict category} is a tuple $(\ob,\hom,\mc,\ass,\id,\idl,\idr)$ of standard categorical structure, where the identity laws ($\idl$, $\idr$, $\ass$) are formulated using strict equality ($\steq$).
  Strict functors, strict natural transformations, and so on are analogously defined.
\end{definition}
Given a universe $\UU$ of fibrant types, the strict category of types in $\UU$ and functions is denoted by $\UU$ as well.
Similarly, we can construct the category $\deltop$ using strict components (objects are $\sN$).
We can then consider the (non-fibrant) type of functors $\deltop \functor \UU$, and define the (non-fibrant) structure of being \emph{Reedy fibrant}.
We can then define the type of Reedy fibrant diagrams over $\deltop$, i.e.\ pairs of a functor and a proof that it is Reedy fibrant, for which we write $\deltop \rffunctor \UU$.

Given $A, E : \deltop \rffunctor \UU$ and a natural transformation $\eta : E \to A$, one can define what it means for $\eta$ to be a \emph{Reedy fibration}, which means that $E$ is a semisimplicial type over $A$ as explained in the previous subsection.
It is standard to denote fibrations by $\eta : E \twoheadrightarrow A$ (although attention is required when there are multiple different notions of fibrations).
We refer to \cite{ann-cap-kra:two-level} for the precise definition.

The type $\deltop \rffunctor \UU$ encodes the intuitive idea of a type of ``infinite tuples'' $(A_0, A_1, A_2, \ldots)$.
This type is (in general) not fibrant.
Given a number $n : \sN$, we can take the full subcategory of $\deltop$ of objects $\{[0], [1], \ldots, [n]\}$.
The corresponding type $\finitedeltop n \rffunctor \UU$ 
is fibrant and can be understood as
the nested $\Sigma$-type of finite tuples $(A_0, A_1, \ldots, A_n)$, i.e.\ as a semisimplicial type of level $n$.
It is possible to equip 2LTT with an axiom which ensures that the type $\deltop \rffunctor \UU$ is fibrant as well, e.g.\ by assuming axiom (A2) of \cite{ann-cap-kra:two-level}, and the models of 2LTT discussed there all satisfy this axiom.
The original suggestion by Voevodsky~\cite{voe_hts}, corresponding to the strongest form of (A1) in~\cite{ann-cap-kra:two-level}, has the same effect, but unnecessarily invalidates the last of the models discussed in \cite[Chp 2.4]{ann-cap-kra:two-level}.

More generally, everything we have said for the type $\deltop \rffunctor \UU$ is also the case for the type of semisimplicial types over a given semisimplicial type $A$.

Our presentation of 2LTT above is significantly simplified but sufficient for the current paper.
For details, we refer to \cite{ann-cap-kra:two-level}.
Apart from axiom (A2), we assume the axioms (M1, M2, T1, T2, T3), although this is only to simplify the presentation; what we do could be done without the latter five axioms.

\begin{takeaway}
\noindent
\textbf{Section summary.}
By \emph{infinite structure}, we mean what can intuitively be described as a record type with infinitely many components, or an infinitely nested $\Sigma$-type.
Semisimplicial types are an example of such a structure where it is unclear whether it can be encoded in homotopy type theory.

In 2LTT, we can define the type $\deltop \rffunctor \UU$, elements of which correspond to the intuitive idea of an infinite tuple $(A_0, A_1, A_2, \ldots)$ describing a semisimplicial type.
There is a reasonable axiom which makes the type $\deltop \rffunctor \UU$ fibrant, i.e.\ lets it behave like a type in homotopy type theory.
\end{takeaway}

\section{Higher Dimensional Categories and Internal $\boldsymbol{\infty}$-CwF's}
\label{sec:highercwf}

We now have all the tools to define \emph{$\infty$-categories with families}.

\subsection{Contexts and substitutions, first part: semi-Segal types}
\label{subsec:highersemicats}

\emph{Segal spaces} (Rezk~\cite{rezk2001model}), also called \emph{Rezk spaces}, are a model for higher categories.
Translating Rezk's Segal condition to type theory is straightforward and has been presented previously by Capriotti~\cite{paolo:thesis} and others, but note that type theory only allows us to talk about \emph{semi}simplicial types.
We briefly sketch the definition as presented in~\cite{ann-cap-kra:two-level}.

On the strict level, and for $n : \sN$, we can define the semisimplicial set $\simplex n$ as the functor represented by $[n]$. 
This is the $n$-dimensional tetrahedron.
A trivial application of the Yoneda lemma shows that, given a semisimplicial type $A$, the type of natural transformations $\simplex n \to A$ is strictly isomorphic to the total space of $A_n$ (a collection of $n+1$ points in $A_0$, $\binom {n+1} 2$ lines, and so on).
Given another number $k : \sN$, we can define the \emph{horn} $\horn n k$ as a semisimplicial set; $\horn n k$ can be obtained from $\simplex n$ by removing the single cell at level $n$, and the cell at level $(n-1)$ which does not depend on the $k$-th point.

There is a canonical inclusion $\horn n k \hookrightarrow \simplex n$.
A semisimplicial type $A$ satisfies the Segal condition if, for any given $n$ and $0 < k < n$, a given $\eta : \horn n k \to A$ can uniquely be extended to a natural transformation $\simplex n \to A$. Here, ``uniquely extending'' means that the type of extensions is contractible. $A$ is also called \emph{local} with respect to $\eta$.
\begin{definition}
    A \emph{semi-Segal type} is a semisimplicial type which satisfies the Segal condition.
\end{definition}    
    
Capriotti and the current author~\cite{capKra_semisegal} unfold the Segal condition explicitly for small $n,k$ (and actually fix $k \equiv 1$, which is sufficient).
Given a semisimplicial type $(A_0, A_1, A_2)$ of level 2, the Segal condition can be expressed as:
\begin{equation}
\begin{aligned}
 h_2 : \; &\{x_0 \, x_1 \, x_2: A_0\} \to (x_{01} : A_1 \, x_0 \, x_1)
\to (x_{12} : A_1 \, x_1 \, x_2)
\to \\
& \iscontr\left(\Sigma(x_{02} : A_1 \, x_0 \, x_2). A_2 \, x_{01}\, x_{12}\, x_{02}\right)
\end{aligned}
\end{equation}
We also call this a \emph{horn-filling condition} or the condition that the horn $x_0 \xrightarrow{x_{01}} x_1 \xrightarrow{x_{12}} x_2$ has a \emph{contractible type of fillers}.

In~\cite{capKra_semisegal}, an explicit translation between horn-filling conditions and the structure one expects of (higher) semicategories is constructed.
It not hard to see that having the pair $(A_2,h_2)$ is equivalent to having a composition operator $(\blank \fc \blank) : \, A_1 \, x_0 \, x_1 \to A_1 \, x_1 \, x_2 \to A_1 \, x_0 \, x_2$:
having an element of a type (here $A_1 \, x_0 \, x_2$) is equivalent to having a family over the type (here $A_2 \, x_{01} \, x_{12} : A_1 \, x_0 \, x_2 \to \UU$) and a prove that it is inhabited at exactly one point (here $h_2 \, x_{01} \, x_{12}$).
On the next level, we require that any map $\horn 3 1 \to A$ can uniquely be extended to $\simplex 3 \to A$, and this (together with $A_3$) is equivalent to stating that $\blank \fc \blank$ is associative.
The next level corresponds to the pentagon coherence of associativity familiar from the definition of a bicategory~\cite{capKra_semisegal}.
This explains why semi-Segal types can be seen as $\infty$-semicategories.

The generalisation from semisimplicial types to maps between semisimplicial types is canonical:

\noindent
\begin{minipage}[t]{\textwidth -4.4cm}
\begin{definition}[inner fibration] \label{def:innerfibration}
 An \emph{inner fibration}, or a \emph{semi-Segal type over a semisimplicial type}, is a Reedy fibration $\eta : E \twoheadrightarrow A$ such that, for all $n$, $0 < k < n$, and squares of the form
 shown on the right, the type of fillers (i.e.\ the type of the dashed map) is contractible.
\end{definition}
\end{minipage}
\hfill
\begin{minipage}[t]{4cm}
    
    \vspace*{-.1cm}
    
  \noindent  
 \begin{equation}
\begin{tikzpicture}[x= 2cm , y = 1.3cm, baseline=(current bounding box.center)]
\node (H) at (0,1) {$\horn n k$};
\node (S) at (0,0) {$\simplex n$};
\node (E) at (1,1) {$E$};
\node (A) at (1,0) {$A$};

\draw[right hook->] (H) to node[left] {} (S);
\draw[->] (H) to node[left] {} (E);
\draw[->] (S) to node[above] {} (A);
\draw[->>] (E) to node[right] {$\eta$} (A);

\draw[->,dashed] (S) to node[right] {} (E);
\end{tikzpicture}
\end{equation}
\end{minipage}

By definition, a semisimplicial type $A$ satisfies the Segal condition exactly if $A \twoheadrightarrow 1$ is an inner fibration.

\subsection{Contexts and substitutions, second part: identities}
\label{subsec:identities}

Since a semi-Segal type can be taken as the definition of an $\infty$-semicategory, the natural next question is how we can extend this definition to capture $\infty$-categories.
To do this, we need to add a property which ensures that the $\infty$-semicategory has identities.
We want this property to satisfy the following conditions:
\begin{enumerate}
    \item Of course, the property should give us (i.e.\ allow us to construct) the ``na\"ive'' identities that are shown in \cref{fig:Ambrus-changed-reordered}: For every object, we want an endomorphism $\id$ such that $\id \mc \sigma = \sigma$ and $\sigma \mc \id = \sigma$.
    (Of course, simply adding this ``na\"ive'' structure itself without coherences would be unsatisfactory.)
    \item The property should be a proposition. In other words, we want an $\infty$-semicategory to be an $\infty$-category in at most one way. This is important because we do not only want to talk about a single $\infty$-category in isolation, but about a \emph{type} of $\infty$-categories.
    \item From the perspective of HoTT, it is natural to want \emph{univalent} (or \emph{saturated}) $\infty$-categories,
    meaning that identities/equalities and isomorphisms coincide (analogous to univalent 1-categories~\cite{ahrens_rezk}).
    One would expect (and we will show this later in this paper) that the standard model is indeed a \emph{univalent} $\infty$-CwF.
    However, not all examples that we want can possibly be univalent: In a type theory, contexts can be isomorphic in non-trivial ways, meaning that the type $\Con$ in a univalent $\infty$-CwF cannot be a set. This would rule out the ``syntax'' (\cref{ex:initial}). Phrased differently, the initial model of a non-trivial type theory would not be on the level of sets
    (and will in particular not have decidable equality for contexts), which contradicts the expectation by Altenkirch~\cite{alt_nicolaitalk} and others.%
    \footnote{In a private conversation with the current author, Eric Finster has pointed out that type-checking algorithms do not rely on context equality being decidable. In other words, one could potentially drop the condition that contexts of the syntactic model form a set, while types and terms certainly should. This may make a univalent syntactic model possible.\label{finster-footnote}}
    Therefore, we want a more general definition of ``being an $\infty$-category'' (a.k.a.\ ``having identities'') which allows us to study both univalent and non-univalent variants.
\end{enumerate}

The question of adding identities to $\infty$-semicategories
has been studied outside the field of type theory long before HoTT was researched.
It is known~\cite{rourke1971delta} that a semisimplicial \emph{set} with (not necessarily unique) fillers for all (not necessarily inner) horns can be extended to a simplicial set. This is however non-constructive and relies on choice.
Our situation is closer to semisimplicial \emph{spaces}
with an inner horn filling condition, for which Lurie~\cite{lurie:higher-algebra} and Harpaz~\cite{harpaz2015quasi} have suggested a simple condition that expresses the presence of suitable identities.
The translation of their idea into type theory leads to the \emph{complete semi-Segal types} by Capriotti~\cite{paolo:thesis} and others.
Complete semi-Segal types can be taken as the definition of type-theoretic univalent $\infty$-categories, but the univalence condition is built into the definition of identities.

A \emph{direct replacement} of the full simplex category
has been suggested by Sattler and the current author~\cite{kraus-sattler:spacediagrams}, which gives rise to (a replacement of) simplicial types and thereby (not necessarily univalent) $\infty$-categories.
Another direct replacement was given by Kock~\cite{kock2006weak}.
These approaches both work by adding an infinite tower of data.

The current paper gives a new and different definition of identities in $\infty$-semicategories.
The idea is based on what is known as \emph{dunce's hat} in topology~\cite{zeeman1963dunce}, which is the simplest example of a space that is contractible but not collapsible (in type-theoretic terms: a space that is contractible but not of the form $\Sigma(a:A). a = a_0$).
Concretely, we define identities to be \emph{idempotent equivalences}.
This is both minimalistic and, as we prove in this section, well-behaved; it satisfies the three properties listed above.
This definition corresponds to a characterisation of identities that was independently suggested by Lai~\cite{luhanglai}, who considered $(\infty,1)$-categories in a slightly different version of type theory.

This author expects that the definition of identities via idempotent equivalences is equivalent to both of the mentioned ``elaborate'' definitions \cite{kock2006weak,kraus-sattler:spacediagrams}. 
If we add univalence, the equivalence with the complete semi-Segal types is obvious.

Our definition of an identity does not need the whole infinite structure of a semi-Segal type, but only the first four levels $(A_0, A_1, A_2, A_3)$.
Using the translation explained in \cref{subsec:highersemicats}, we phrase this subsection in the (maybe more familiar) language of a semicategory $(\ob, \hom, \mc, \ass)$, without any requirement of set-truncation.
This allows us to present the development in a very elementary way, without the need to allude to infinite structures.
This development has been completely formalised in Agda (see \cref{footnote:agda-links} on page \pageref{footnote:agda-links}).

For the remainder of this section, we assume that we are given a wild semicategory 
\begin{definition}[identities via idempotent equivalences]
  In a wild semicategory $(\ob, \hom, \mc, \ass)$,
  a morphism $e : \hom (x,y)$ is an \emph{equivalence} (\emph{neutral morphism} in \cite{capKra_semisegal}), written $\iseqv(e)$,
  if both composition operations
 \begin{align}
  & (\blank  \mc e) : \Pi\{z:\ob\}. \hom(y,z) \to \hom(x,z) \\
  & (e \mc \blank)  : \Pi\{w:\ob\}. \hom(w,x) \to \hom(w,y) 
 \end{align}
 are equivalences of types.
 A morphism $f : \hom(x,x)$ is \emph{idempotent} if $f \mc f = f$.
 A morphism $i : \hom(x,x)$ is a \emph{good identity} if it is an equivalence and idempotent.
 The wild semicategory $C \equiv (\ob,\hom,\mc,\ass)$ has a 
 \emph{good identity structure}
 if there is an identity for every object,
 \begin{equation}
  \mathsf{hasGoodIdStruc}(C) :\equiv \Pi(x:\ob). \Sigma(i: \hom(x,x)). \iseqv(i) \times (i \mc i = i).
 \end{equation}
\end{definition}

\begin{remark}
 Note that ``being an equivalence'' is a propositional property, while ``being idempotent'' and ``being a good identity'' are data.
 The non-trivial result of this section is that ``having a good identity structure'' is a propositional property.
\end{remark}

\begin{lemma} \label{lem:id-characterisation}
 A morphism $i : \hom(x,x)$ is a good identity if and only if,
 for all composable morphisms
 $\xrightarrow f \xrightarrow i \xrightarrow g$,
 we have $i \mc f = f$ and $g \mc i = g$.
\end{lemma}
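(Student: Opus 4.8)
The plan is to prove the two implications separately, noting in advance that the equivalence hypothesis only does real work in the forward direction, while function extensionality carries the backward direction.

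For the backward implication, suppose $i \mc f = f$ and $g \mc i = g$ hold for all composable $f$ (into $x$) and $g$ (out of $x$). Idempotence is immediate by instantiating the first law with $f :\equiv i$, yielding $i \mc i = i$. For the equivalence part, I observe that the two hypotheses say precisely that the maps $(i \mc \blank) : \hom(w,x) \to \hom(w,x)$ and $(\blank \mc i) : \hom(x,z) \to \hom(x,z)$ agree pointwise with the identity function. By function extensionality each of these maps therefore \emph{equals} the identity map, which is an equivalence; hence $\iseqv(i)$ holds and $i$ is a good identity.

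For the forward implication, suppose $i$ is a good identity, so $\iseqv(i)$ holds and $i \mc i = i$. The key observation is that a type-theoretic equivalence is in particular injective (its action on paths is itself an equivalence), so both composition maps $(i \mc \blank)$ and $(\blank \mc i)$ reflect equalities. To obtain $i \mc f = f$, I would compute, using associativity $\ass$ and then idempotence, the chain $i \mc (i \mc f) = (i \mc i) \mc f = i \mc f$; reading this as $(i \mc \blank)(i \mc f) = (i \mc \blank)(f)$ and cancelling the injective map $(i \mc \blank)$ gives $i \mc f = f$. Symmetrically, to obtain $g \mc i = g$ I would compute $(g \mc i) \mc i = g \mc (i \mc i) = g \mc i$, read this as $(\blank \mc i)(g \mc i) = (\blank \mc i)(g)$, and cancel the injective map $(\blank \mc i)$.

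I do not expect a serious obstacle here; the content is genuinely elementary once the definitions are unfolded. The only points requiring care are bookkeeping of the composition convention (keeping straight which of $(i \mc \blank)$ and $(\blank \mc i)$ is the relevant injective map in each cancellation) and the invocation of the standard fact that an equivalence is injective. It is worth emphasising in the write-up that associativity is indispensable in the forward direction and function extensionality in the backward one, so that the characterisation depends on exactly the structure $(\ob,\hom,\mc,\ass)$ of a wild semicategory and nothing more.
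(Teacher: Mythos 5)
Your proposal is correct and follows essentially the same route as the paper: the forward direction is the computation $i \mc (i \mc f) = (i \mc i) \mc f = i \mc f$ followed by cancelling the equivalence $(i \mc \blank)$ (the paper phrases this as applying $(i \mc \blank)^{-1}$ to both sides, which is the same step), and the backward direction is the observation that the hypotheses make both composition maps pointwise the identity, hence equivalences, with idempotence obtained by instantiating $f :\equiv i$. The paper dismisses the backward direction with a single ``clearly''; your spelled-out version is a harmless elaboration (and in fact function extensionality is not strictly needed there, since a map pointwise equal to an equivalence is already an equivalence).
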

\begin{proof}
 Let $i$ be an idempotent equivalence and $f : \hom(w,x)$.
 We have $i \mc i \mc f = i \mc i \mc f$ trivially, thus by idempotence (and implicitly associativity) also $i \mc i \mc f = i \mc f$, and by applying $(i \mc \blank)^{-1}$ to both sides $i \mc f = f$.
 The proof of $g \mc i = g$ is analogous.
 
 Conversely, if $i$ is such that both compositions are the identity, it is clearly an equivalence and idempotent.
\end{proof}

\begin{corollary} \label{cor:only-one-id}
 If $i_1, i_2 : \hom(x,x)$ are both good identities, then $i_1 = i_2$.
\end{corollary}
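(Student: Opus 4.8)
The plan is to use \cref{lem:id-characterisation} to reduce the statement to a simple cancellation argument. Since $i_1$ is a good identity, the lemma tells me that precomposing or postcomposing with $i_1$ acts as the identity on composable morphisms; the same holds for $i_2$. The key observation is that both $i_1$ and $i_2$ are endomorphisms of the same object $x$, so they are composable with each other in both orders, and I can feed each one into the characterisation of the other.

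Concretely, first I would consider the composite $i_1 \mc i_2$. Applying \cref{lem:id-characterisation} to $i_1$ with the morphism $f \defeq i_2 : \hom(x,x)$, I get $i_1 \mc i_2 = i_2$. Then, applying the same lemma to $i_2$ with the morphism $g \defeq i_1 : \hom(x,x)$ (using the right-identity clause $g \mc i = g$), I get $i_1 \mc i_2 = i_1$. Chaining these two equalities gives $i_1 = i_1 \mc i_2 = i_2$, which is exactly the claim. This is the standard uniqueness-of-units argument transported into the wild-semicategory setting, now licensed by the fact that \cref{lem:id-characterisation} has already established that good identities behave as genuine two-sided units on composable morphisms.

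I do not anticipate a serious obstacle here, since all the real work was done in \cref{lem:id-characterisation}; the proof is a two-line equational chase. The one point requiring minor care is that the equalities produced are inhabitants of an identity type and, in the absence of UIP, should in principle be tracked as data rather than merely asserted — but for the purpose of this corollary only the existence of the path $i_1 = i_2$ is claimed, so composing (transitively concatenating) the two paths obtained above suffices and no coherence between different such paths is required. If anything, I would simply remark that associativity ($\ass$) is used implicitly when forming the composite $i_1 \mc i_2$, exactly as in the preceding lemma.
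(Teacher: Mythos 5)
Your proof is correct and is exactly the argument the paper intends: the corollary is stated without proof as an immediate consequence of \cref{lem:id-characterisation}, obtained by the chain $i_1 = i_1 \mc i_2 = i_2$ (one equality from each identity's characterisation). Your remarks about paths-as-data and implicit associativity are accurate but not needed for the statement as given.
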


Note that \cref{lem:id-characterisation} does not state an equivalence of types, but only claims ``functions in both directions''.
We have seen in \cref{ex:initial2} that the ``na\"ive'' characterisation causes coherence issues.
In contrast, the following shows that the definition via idempotent equivalences is fully coherent:

\begin{theorem} \label{thm:id-struc-is-prop}
 For a given semicategory $\CC$, the type $\mathsf{hasGoodIdStruc}(C)$ is a proposition.
\end{theorem}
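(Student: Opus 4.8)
The plan is to prove propositionality by a sequence of reductions that isolates a single coherence computation as the crux. First, since $\mathsf{hasGoodIdStruc}(C)$ is a $\Pi$-type and function extensionality is assumed, it suffices by the standard closure of $\isprop$ under dependent products to show that for each fixed $x : \ob$ the fibre
\[ P(x) :\equiv \Sigma(i : \hom(x,x)). \iseqv(i) \times (i \mc i = i) \]
is a proposition. Here I would exploit that proving $\isprop(P(x))$ amounts to exhibiting a function $\Pi(c_1\, c_2 : P(x)). c_1 = c_2$; since this function \emph{is} the proof of propositionality, no separate higher coherence needs to be verified afterwards.

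So let $c_1 \equiv (i_1, e_1, p_1)$ and $c_2 \equiv (i_2, e_2, p_2)$ be two elements. A path $c_1 = c_2$ in the $\Sigma$-type consists of a base path $q : i_1 = i_2$ together with a path in the fibre over $q$. For the base path I would reuse exactly the argument of \cref{cor:only-one-id}: both $i_1$ and $i_2$ are good identities, hence two-sided units by \cref{lem:id-characterisation}, so $i_1 = i_1 \mc i_2 = i_2$ supplies $q$. The fibre over $q$ is a product; its $\iseqv$-component is discharged for free, because being an equivalence is a proposition and so the transported proof and the given $e_2$ are automatically identified.

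The genuine content lies in the idempotence component, and this is precisely where the absence of UIP bites: the type $i_2 \mc i_2 = i_2$ is in general \emph{not} a proposition — for instance in the $\Scircle$-based model of \cref{lem:nonset-wild-CwF}, the loop $\Sloop$ produces non-trivial idempotence proofs even on the identity — so one cannot simply appeal to truncatedness of the fibre. Writing $\mu(i) :\equiv i \mc i$, transport in the family $i \mapsto (\mu(i) = i)$ along $q$ turns the required identification $p_1 =_q p_2$ into the naturality square $p_1 \ct q = \ap_{\mu}(q) \ct p_2$. The heart of the proof is to verify this square, and the point is that $q$ is \emph{not} arbitrary: it was manufactured from the derived unit laws of \cref{lem:id-characterisation}, which are themselves obtained by cancelling the equivalences $(i_1 \mc \blank)$ and $(i_2 \mc \blank)$ against the idempotence witnesses $p_1$ and $p_2$. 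I expect the square to follow by unfolding these definitions and reasoning purely with associativity and the groupoid laws of the path algebra, the two equivalence-cancellations accounting exactly for the two factors hidden in $q$.

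Conceptually this is the \emph{dunce's hat} phenomenon alluded to in the text: although both the base $\hom(x,x)$ and the idempotence fibre carry non-trivial higher structure, the structure of the fibre is precisely \emph{absorbed} by the base path $q$, leaving a contractible — hence propositional — total space. I expect the single real obstacle to be the square $p_1 \ct q = \ap_{\mu}(q) \ct p_2$: it is a concrete but fiddly identification, exactly the kind of bookkeeping that the accompanying Agda formalisation is well-suited to discharge, and getting the orientations and whiskerings of the derived unit laws to line up is where care is needed. Everything else in the argument is formal.
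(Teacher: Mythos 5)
Your reduction is set up correctly (pointwise via function extensionality, base path from \cref{cor:only-one-id}, the $\iseqv$ component free because it is a proposition), and you have correctly located where all the difficulty sits: the transport of the idempotence witness along $q$, i.e.\ the square $p_1 \ct q = \ap_{\mu}(q) \ct p_2$. But that square is exactly the content of the theorem, and you do not prove it --- you assert that it ``should follow by unfolding \ldots and reasoning purely with associativity and the groupoid laws.'' That is a genuine gap, and an optimistic one. The derived unit laws of \cref{lem:id-characterisation} are built by applying the \emph{inverse} of the equivalence $\ap_{(i \mc \blank)}$ to a composite of $\ass$ and $\ap$ of the idempotence witnesses; unfolding $q$ therefore produces a two-dimensional computation involving whiskerings, naturality of $\ap$, and interactions between several instances of $\ass$ for which a wild semicategory provides no coherence (no pentagon). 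This is precisely the kind of coherence that \cref{ex:initial2,ex:axmodel2} show can fail in wild structures, so ``just path algebra'' is not a justification. Note also that the square is load-bearing in a strong sense: $\hom(x,x)$ need not be a set, so $i_1 = i_2$ may have many inhabitants, only some of which extend to a path in the total space; your argument must verify the square for the \emph{specific} $q$ you manufactured, and nothing in the proposal does so.

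The paper's proof avoids the square entirely by a change of variables rather than a direct path computation. Setting $I(e) \defeq (e \mc \blank)^{-1}(e)$, one shows that $I(e)$ is always a good identity (\cref{lem:I-is-id}) and that $\ap_{(e \mc \blank)}$ restricts to an equivalence $(e = I(e)) \simeq (e \mc e = e)$ (\cref{lem:e-I-idem}). This trades the idempotence datum for a \emph{path} datum, so that, fixing one inhabitant $(i_0,p,q)$,
\begin{equation*}
  \Sigma(i:\hom(x,x)).\, \iseqv(i) \times (i \mc i = i) \;\simeq\; \Sigma(i:\hom(x,x)).\, \iseqv(i) \times (i = i_0),
\end{equation*}
using \cref{cor:only-one-id} to replace $I(i)$ by $i_0$. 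The right-hand side is a based path space restricted by a propositional predicate that holds at the basepoint, hence contractible --- no transport square ever has to be checked, because the equivalence structure of $(i \mc \blank)$ absorbs it. If you want to salvage your direct route, you would essentially have to reprove \cref{lem:e-I-idem} in disguise; as written, the proposal identifies the obstacle but does not overcome it.
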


Before proving this theorem, we formulate several very simple auxiliary statements.
Given any equivalence $e : \hom(x,y)$, we define $I(e) : \hom(x,x)$ by
\begin{equation} \label{eq:I-def}
  I(e) \defeq (e \mc \blank )^{-1}(e).
\end{equation}
This is how identities (modulo the translation of \cref{subsec:highersemicats}) are constructed by Harpaz~\cite{harpaz2015quasi} and by Capriotti and the author~\cite{capKra_semisegal}, where sufficiently many equivalences are assumed as given.

\begin{lemma} \label{lem:I-is-id}
 For any equivalence $e : \hom(x,y)$, the morphism $I(e)$ is a good identity.
\end{lemma}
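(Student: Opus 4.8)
The plan is to invoke \cref{lem:id-characterisation}, which reduces the claim that $I(e)$ is a good identity to verifying the two identity laws: $I(e) \mc f = f$ for every $f : \hom(w,x)$, and $g \mc I(e) = g$ for every $g : \hom(x,z)$. The only property of $I(e)$ I would need is its defining equation $e \mc I(e) = e$; this holds by construction, since $I(e) \defeq (e \mc \blank)^{-1}(e)$ and applying the equivalence $(e \mc \blank)$ to its own inverse evaluated at $e$ returns $e$.

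For the left law, let $f : \hom(w,x)$. Using associativity together with the defining equation, I compute $e \mc (I(e) \mc f) = (e \mc I(e)) \mc f = e \mc f$. Since $e$ is an equivalence, the map $(e \mc \blank) : \hom(w,x) \to \hom(w,y)$ is in particular injective, so cancelling it yields $I(e) \mc f = f$.

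The right law is the step I expect to be the main obstacle, because the definition of $I(e)$ builds in the interaction of $e$ with $I(e)$ only on the \emph{left} (through $e \mc I(e) = e$) and says nothing a priori about postcomposition by $I(e)$. The trick is to exploit that $e$ is an equivalence on the other side as well: for $g : \hom(x,z)$, the map $(\blank \mc e) : \hom(y,z) \to \hom(x,z)$ is an equivalence, hence surjective, so there is some $\bar g : \hom(y,z)$ with $\bar g \mc e = g$. Then, using associativity and $e \mc I(e) = e$, I obtain $g \mc I(e) = (\bar g \mc e) \mc I(e) = \bar g \mc (e \mc I(e)) = \bar g \mc e = g$.

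With both identity laws established, \cref{lem:id-characterisation} concludes that $I(e)$ is a good identity. I note that one could equally verify idempotence and being an equivalence directly: idempotence follows by cancelling $(e \mc \blank)$ from $e \mc (I(e) \mc I(e)) = (e \mc I(e)) \mc I(e) = e \mc I(e)$, while the two identity laws show that $(I(e) \mc \blank)$ and $(\blank \mc I(e))$ are in fact the identity function, a fortiori equivalences. Routing through \cref{lem:id-characterisation} is, however, the most economical presentation.
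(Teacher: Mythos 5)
Your proof is correct, but it assembles the result differently from the paper. The paper verifies the two defining properties of a good identity directly: it runs exactly your left-law computation, $e \mc (I(e) \mc f) = (e \mc I(e)) \mc f = e \mc f$, cancels $(e \mc \blank)$ to get $I(e) \mc f = f$, and specialises to $f \defeq I(e)$ to obtain idempotence; the equivalence property then comes for free from \emph{2-out-of-3} applied to the equation $e \mc I(e) = e$. You instead establish both neutrality laws and feed them into the converse direction of \cref{lem:id-characterisation}. This forces you to prove the right law $g \mc I(e) = g$, which the paper never needs; your device of writing $g = \bar{g} \mc e$ with $\bar{g} \defeq (\blank \mc e)^{-1}(g)$ is a clean and valid way to do it, and since \cref{lem:id-characterisation} precedes this lemma there is no circularity. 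What the paper's route buys is brevity: ``being an equivalence'' is obtained abstractly from 2-out-of-3 rather than by exhibiting $(\blank \mc I(e))$ as (pointwise) the identity function, so the right law never has to be touched. Your closing remark essentially recovers the paper's argument for idempotence, though for the equivalence part you still route through the right law where 2-out-of-3 would sidestep it.
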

\begin{proof}
We can show idempotency directly by setting $f \defeq I(e)$ in the following calculation: 
 \begin{alignat}{12}
  &I(e) \mc f &\quad & = & \quad & (e \mc \blank)^{-1}\left((e \mc \blank) (I(e) \mc f)\right) \\
  &&& = && (e \mc \blank)^{-1}\left((e \mc I(e)) \mc f \right) \\
  &&& = && (e \mc \blank)^{-1}\left(e \mc f \right) \\
  &&& = && f
 \end{alignat}
 Furthermore, $I(e)$ is an equivalence since equivalences satisfy \emph{2-out-of-3} and $e \mc I(e) = e$.
\end{proof}

\begin{lemma} \label{lem:e-I-idem}
 Let $e : \hom(x,x)$ be an equivalence. The type witnessing that $e$ equals $I(e)$ is equivalent to the type witnessing that $e$ is idempotent,
 \begin{equation} \label{eq:e-Ie}
  (e = I(e)) \simeq (e \mc e = e)
 \end{equation}
\end{lemma}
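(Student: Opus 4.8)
The plan is to reduce the claim to the single elementary fact that an equivalence of types induces an equivalence on its path spaces, combined with the defining property of $I(e)$. First I would observe that, by the hypothesis $\iseqv(e)$, left-composition $(e \mc \blank) : \hom(x,x) \to \hom(x,x)$ — instantiating the implicit object $w \defeq x$ and using that here $y$ is $x$ — is an equivalence of types. A standard result of homotopy type theory~\cite[Thm.~2.11.1]{hott-book} states that for any equivalence $\phi : A \to B$ and any $a, a' : A$, the map $\ap_\phi : (a = a') \to (\phi\,a = \phi\,a')$ is itself an equivalence. Applying this with $\phi \defeq (e \mc \blank)$, $a \defeq e$, and $a' \defeq I(e)$ yields an equivalence
\[ (e = I(e)) \;\simeq\; (e \mc e = e \mc I(e)). \]

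Next I would identify the right-hand side with $(e \mc e = e)$. By the definition $I(e) \defeq (e \mc \blank)^{-1}(e)$ from \eqref{eq:I-def}, the section property of the inverse of an equivalence supplies a path $p : e \mc I(e) = e$ — this is exactly the equality $e \mc I(e) = e$ that was already used in the proof of \cref{lem:I-is-id}. Transporting along $p$ in the second argument of the identity type $(e \mc e = \blank)$ produces an equivalence
\[ (e \mc e = e \mc I(e)) \;\simeq\; (e \mc e = e). \]
Composing the two displayed equivalences gives the desired $(e = I(e)) \simeq (e \mc e = e)$, completing the proof.

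Since the argument is just a chaining of two elementary equivalences, I do not expect any substantial obstacle. The only points requiring care are instantiating $(e \mc \blank)$ at the correct object $w \defeq x$ so that it becomes an endo-equivalence of $\hom(x,x)$ (rather than a map between different hom-types), and transporting in the correct argument of the path type. Both of these mirror manipulations already carried out in \cref{lem:I-is-id}, so the lemma should follow with essentially no new ideas; its role is really to package the relationship between idempotency and the canonical identity $I(e)$ so that it can be fed into the proof of \cref{thm:id-struc-is-prop}.
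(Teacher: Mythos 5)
Your proposal is correct and follows essentially the same route as the paper, whose entire proof is the remark that the equivalence is $\ap_{(e \mc \blank)}$; you simply make explicit the final identification of $(e \mc e = e \mc I(e))$ with $(e \mc e = e)$ via the section path $e \mc I(e) = e$, a step the paper leaves implicit.
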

\begin{proof}
 The equivalence is $\ap_{e \mc \blank}$ (i.e.\ applying $(e \mc \blank)$ on both sides of the equation).
\end{proof}

We are ready to show that, given a semicategory, there is at most one good identity structure.
\begin{proof}[Proof of \cref{thm:id-struc-is-prop}]
 Let us fix $x:\ob$; by function extensionality, it suffices to prove that $\Sigma(i:\hom(x,x)). \iseqv(i) \times (i \mc i = i)$ is propositional. Assume we are given an element $(i_0, p, q)$; then:
\begin{alignat}{12}
&&&&& \Sigma(i:\hom(x,x)). \iseqv(i) \times (i \mc i = i) \\
&\text{by \cref{lem:e-I-idem}} &\quad & \simeq & \quad & \Sigma(i:\hom(x,x)). \iseqv(i) \times (i = I(i))  \\
&\text{by \cref{cor:only-one-id,lem:I-is-id}} &\quad & \simeq & \quad & \Sigma(i:\hom(x,x)). \iseqv(i) \times (i = i_0)
\end{alignat}
This type is easily seen to be contractible, with $(i_0, p, \refl)$ as centre of contraction.
\end{proof}

Phrased in the language of a semi-Segal type $(A_0, A_1, A_2, \ldots)$, 
a morphism $e : A_1 \, x \, y$ is an equivalence if any horn of the form $1 \xleftarrow e 0 \xrightarrow {} 2$ or $0 \xrightarrow {} 2 \xleftarrow e 1$
has a contractible type of fillers.
A proof that the morphism $f : A_1 \, x \, x$ is idempotent is an element of $A_2 \, f \, f \, f$.

\begin{definition} \label{def:infcategory}
    An $\infty$-category is an $\infty$-semicategory with a good identity structure.
\end{definition}

Univalence for $\infty$-categories is an optional additional property which we discuss in \cref{sec:variations}.
From the next subsection on, we will drop the word \emph{good} and simply speak of \emph{identities} rather than \emph{good identities}.
There is no risk of confusion, especially since \cref{lem:id-characterisation} essentially shows that \emph{every} identity is good.

\begin{remark}
Of course, the above results show in particular that ``being a good identity'' is equivalent to ``being left- and right-neutral'' if $\hom$ is a family of sets.
Thus, the usual definition of (pre-)category as given by Ahrens, Kapulkin, and Shulman~\cite{ahrens_rezk} can equivalently be formulated using good identities instead of ``standard'' identities.
\end{remark}

\subsection{The empty context: a terminal object}

Modelling the empty context as a terminal object in the category of contexts is standard.

\begin{definition}[terminal object] \label{def:terminal}
 Given an $\infty$-category $(A_0, A_1, A_2, \ldots)$, an object $x : A_0$ is \emph{terminal} if, for all $y : A_0$, the type $A_1 \, y \, x$ is contractible.
\end{definition}

\subsection{Types: a presheaf} \label{subsec:types}

Following \cref{def:CwF}, types should be a presheaf on the category of contexts.
This author is aware of two differently looking (but of course in a suitable sense equivalent) ways to define what an \emph{$\infty$-presheaf} (or \emph{prestack}) is in this setting.
The first possibility is to:
\begin{enumerate}
    \item define the \emph{opposite category} $\Acat^\mathsf{op}$;
    \item define the $\infty$-category $\TT$ of types and functions;
    \item and define what an \emph{$\infty$-functor} between $\infty$-categories is. \label{item:functor}
\end{enumerate}
The second possibility, suggested and studied by Christian Sattler in unpublished work, is to consider \emph{right fibrations} over the category of contexts.
We will discuss this second approach and its advantages below in \cref{rem:left-fib-classifier}.
For now, we concentrate on the first possibility which is closer to the 1-categorical formulation discussed in \cref{sec:tt-in-tt}.

Let us start with the last point \eqref{item:functor}. A morphism in a functor category is a natural transformation, and $\infty$-categories are certain functors (equipped with structure).
Manually unfolded to type theory, a natural transformation between semisimplicial types of level 2, from $(A_0, A_1, A_2)$ to $(B_0, B_1, B_2)$ is a tuple $(F_0, F_1, F_2)$ where:

\begin{alignat}{3}
& F_0 &&: &\; & A_0 \to B_0 \\
& F_1 &&: && \{x_0 \, x_1 : A_0\} \to (A_1 \, x_0 \, x_1) \to (B_1 \, (F_0 \, x_0) \, (F_0 \, x_1)) \\
& F_2 &&: && \{x_0 \, x_1 \, x_2 : A_0\} \to \{x_{01} : A_1 \, x_0 \, x_1\} \to \{x_{12} : A_1 \, x_1 \, x_2\} \to \{x_{02} : A_1 \, x_0 \, x_2\} \to \notag \\
&&&&& (A_2 \, x_{01} \, x_{12} \, x_{02}) \to (B_2 \, (F_1 \, x_{01}) \, (F_1 \, x_{12}) \, (F_1 \, x_{02})
\end{alignat}

Being a map between the underlying semisimplicial types already ensures that $F$ preserves the compositionality stucture.
We can make this transparent on level 2
via the translation explained in \cref{sec:sstypes},
under which the last component $F_2$ translates to:
\begin{alignat}{3}
& F_2' &&: &\; & \{x_0 \, x_1 \, x_2 : A_0\} \to \{x_{01} : A_1 \, x_0 \, x_1\} \to \{x_{12} : A_1 \, x_1 \, x_2\} \to
\notag \\
&&&&& (F_1 \, x_{12}) \mc (F_1 \, x_{01}) = F_1 (x_{12} \mc x_{01})
\end{alignat}

$\infty$-categories come with a proof that the underlying functors are Reedy fibrant, but the natural transformation takes care of this automatically.
Thus, the only task that is left is to ensure that the functor $F$ preserves identities, i.e.\ if $i : A_1 \, x \, x$ is an idempotent equivalence, then $F_1 \, i$ has to be an idempotent equivalence as well.
Stating it in this way would not give a well-behaved notion of $\infty$-functor, since ``being idempotent'' is not a propositional property.
Fortunately, idempotence is already preserved automatically by $F_2$, and ``being an equivalence'' \emph{is} a propositional property.

\begin{definition}[$\infty$-functor] \label{def:ifunctor}
    An \emph{$\infty$-functor} between $\infty$-categories is a natural transformation $F \equiv (F_0, F_1, F_2, \ldots)$ which maps identities to equivalences, i.e.\ for which we have:
    \begin{equation}
     \mathsf{id}\mbox{-}\mathsf{preserving}(F) : (x : A_0) \to (i : A_1 \, x \, x) \to \isid(i) \to \iseqv(F_1 \, i)
    \end{equation}
\end{definition}

\begin{remark}
  A natural transformation between semi-Segal types does not automatically preserve equivalences or even identities. Let $\top$ be the terminal semi-Segal type, which is the unit type $\unit$ on every level.
  Let $A$ be the semi-Segal type defined by $A_0 \defeq \unit$, $A_1 \, \blank \defeq (\bool \to \bool)$, $A_2 \, f \, g \, h \defeq (g \fc f = h)$, and $A_{k+3}$ being constantly $\unit$. There are three maps $\top \to A$, each of them even an inner fibration, but only a single of them preserves identities. 
\end{remark}

Given an $\infty$-category $A \equiv (A_0, A_1, A_2, \ldots)$, we need to define the opposite category $A^{\mathsf{op}} \equiv (A_0^{\mathsf{op}}, A_1^{\mathsf{op}}, A_2^{\mathsf{op}}, \ldots)$.
There is really only one possible construction: of course we want $A_0^{\mathsf{op}} \defeq A_0$ and $A_1^{\mathsf{op}} \, x \, y \defeq A \, y \, x$, and after this, everything is determined; e.g.\ we have $A_2^{\mathsf{op}} \, f \, g \, h \defeq A_2 \, g \, f \, h$, since nothing else would type-check.
The concrete combinatorial description for the representation as functors is identical to the one for simplicial sets given by Lurie~\cite[Sec~1.2.1]{lurie:higher-topoi}.

Finally, we need the $\infty$-category $\TT$ of types and functions,
starting with $\TT_0 \equiv \UU$, $\TT_1 \, X \, Y \defeq (X \to Y)$, and $\TT_2 \, f \, g \, h  \equiv (g \fc f = h)$.
The complete construction has been given in~\cite{ann-cap-kra:two-level}, which we briefly sketch here.
Given any \emph{strict} category $C$, such as the category of types and functions, one can define $\nerve(C) : \deltop \functor \UU$ via the usual nerve construction which defines $\nerve(C)_n$ to be sequences $X_0 \to X_1 \to \ldots \to X_n$.
This functor is not Reedy fibrant, but via a general Reedy fibrant replacement construction \cite[Sec.\ 4]{ann-cap-kra:two-level}, one can find a levelwise equivalent and Reedy fibrant functor $R(\nerve(C))$.
The identity structure is obvious and simply comes from $X \xrightarrow{\mathsf{id}} X$.

\begin{definition}[$\infty$-category of small types]\label{def:TT}
    The $\infty$-category $\TT$ is given as $R(\nerve(\UU))$.
\end{definition}

\subsection{Terms: diagrams over a category of elements} \label{subsec:terms}

Recall from \cref{def:CwF} that terms are in 1-CwF's modelled by a functor $\Tm : \int_{\CCop} \Ty \to \Set$.
Our treatment of types has already covered several of the components that are needed in order to translate this to the $\infty$-categorical setting.
The remaining missing part is the construction of the \emph{category of elements}, which is what we explain in this subsection.

Let $\Acat$ be an $\infty$-category and $F : \Acat \to \TT$ be an $\infty$-functor.
As one may expect, $\int_{\Acat} F$ can be viewed as an $\infty$-category \emph{over} $\Acat$: Given an $n$-simplex in $\Acat$ consisting of vertexes $x_0, x_1, \ldots$, lines $x_{01}, \ldots$, and so on, then we can build an $n$-simplex in $\int_{\Acat} F$ if we have something in $F_0 \, x_0$, in $F_0 \, x_1$, in $F_1 \, x_{01}$, and so on, where all the new data has to ``match''. 
Our goal is to define $\int_{\Acat} F$ on all levels.

We first solve the problem for the case that $A$ is a strict category and $F : A \to \UU$ a strict functor
with the following ad-hoc construction, which has been (partially) discussed in \cite{ann-cap-kra:two-level}.
We define $\nervep(A,F)$ to be the functor $\deltop \functor \UU$ given by the nerve together with a single point: An element of $\nervep(A,F)_n$ is a pair $(s, p)$ of a chain $s : x_0 \to x_1 \to \ldots \to x_n$ in $A$ and a point $p : F \, x_0$.
There is a canonical natural transformation $\eta : \nervep(A,F) \to \nerve(A)$ which, on each level, forgets the point.
Using the mentioned Reedy fibrant replacement \cite[Sec\ 4]{ann-cap-kra:two-level} twice, we get a Reedy fibration 
between Reedy fibrant diagrams.

For the terminal (``universal'') case, where $A$ is $\UU$ and $F$ the identity functor, we denote the constructed Reedy fibration 
by $\TTp \twoheadrightarrow \TT$. 
We call $\TTp$ the \emph{$\infty$-category of pointed types}.
For the first few levels, its ``over $\TT$'' representation (cf.\ \ref{eq:E-over-A-1}--\ref{eq:E-over-A-3}) is the following, where we annotate arguments with their types for readability
and $\mathsf{happly}$ is the function given in \cite[Eq.\ 2.9.2]{hott-book}:

\begin{alignat}{3}
& \TTp_0 \, (X : \UU) &\quad & \defeq &\quad & X \\
& \TTp_1 \, (f : X \to Y) \, (x : X) \, (y : Y) && \defeq && (f \, x = y) \\
& \TTp_2 \, (\alpha : g \fc f = h) \, (e_0 : f \, x = y) \notag \\ 
& \phantom{\TTp_2 \,} (e_1 : g \, y = z) \, (e_2 : h \, x = z) && \defeq && e_2 = (\mathsf{happly} \, \alpha \, x) \ct (\ap_g \, e_0) \ct e_1
\end{alignat}

We can now come back to the general case of an $\infty$-category $\Acat$ and an $\infty$-functor $F : \Acat \to \TT$.
Recall (from \cite{ann-cap-kra:two-level}) that Reedy fibrations are closed under pullback.

\vspace*{+.0cm}

\noindent
\begin{minipage}[t]{\textwidth -4.2cm}
    
\begin{definition}[$\infty$-category of elements] \label{def:catofels}
    For an $\infty$-category $\Acat$ and a functor $F : \Acat \to \TT$, the \emph{$\infty$-category of elements of $F$}, written $\int_{\Acat} F$, is defined to be the strict pullback of $F$ along the Reedy fibration $\TTp \twoheadrightarrow \TT$, as shown on the right.
\end{definition}
\end{minipage}
\hspace{\fill}
\begin{minipage}[t]{4cm}
    \vspace*{-.5cm}
    \begin{equation} \label{eq:el-def}
\begin{tikzpicture}[x= 1.5cm , y = -1.25cm, baseline=(current bounding box.center)]
\node (El) at (0,0) {$\int_\Acat F$};
\node (A) at (0,1) {$\Acat$};
\node (PT) at (1,0) {$\TTp$};
\node (T) at (1,1) {$\TT$};

\draw +(.15,.3) -- +(.3,.3)  -- +(.3,.15);

\draw[->>,dashed] (El) to node[left] {$\pi_1$} (A);
\draw[->,dashed] (El) to node[left] {} (PT);
\draw[->] (A) to node[above] {$F$} (T);
\draw[->>] (PT) to node[above] {} (T);
\end{tikzpicture}
\end{equation}
\end{minipage}

\vspace*{.2cm}

It is standard that pullbacks are closed under [left] fibrations (given a lifting problem for $\int_{\Acat} F \twoheadrightarrow \Acat$, we extend it to a lifting problem for $\TTp \twoheadrightarrow \TT$ and use the pullback property).
The composition of a left and an inner fibration is an inner fibration by a similar argument. 
Therefore, $\int_{\Acat}F$ is an $\infty$-category.

\begin{remark}[left and right fibrations] \label{rem:left-fib-classifier}
    Let $\eta : E \twoheadrightarrow A$ be a Reedy fibration between semisimplicial types.
    $\eta$ is a \emph{left fibration} if it fulfils the condition of \cref{def:innerfibration} for $0 \leq k < n$ (that means that all left fibrations are inner fibrations, but only some inner fibration are left fibrations).
    Analogously, $\eta$ is a \emph{right fibration} if the condition of \cref{def:innerfibration} holds for for $0 < k \leq n$. 
    
    One can show that the map $\eta : \nervep(A,F) \to \nerve(A)$ constructed above, and in particular $\TTp \twoheadrightarrow \TT$, is a left fibration.
    One can further show that $\TTp \twoheadrightarrow \TT$ is a \emph{homotopical left fibration classifier}:
    For any left fibration $E \twoheadrightarrow A$, the type of 
    tuples $(g,h,q)$ with $g : A \to \TT$, $h : E \to \TTp$, and $q$ witnessing that the resulting square commutes up to homotopy and is a homotopy pullback, is contractible.
    
    This implies that left fibrations over $A$ are in a suitable sense equivalent to $\infty$-functors $A \to \TT$.
    Thus, $\infty$-presheaves on $A$ correspond to right fibrations over $A$.
    If $\CC$ is the $\infty$-category of contexts, we can define types to be given as a right fibration $\mathsf{Ty} \twoheadrightarrow \CC$.
    Since $\mathsf{Ty}$ already is (the opposite of) the $\infty$-category of elements, terms are then simply given by a second right fibration $\mathsf{Tm} \twoheadrightarrow \mathsf{Ty}$.
    This formulation is due to Christian Sattler.
    
    Note that the strict pullback \eqref{eq:el-def} is also a homotopy pullback since the vertical maps are Reedy fibrations.
\end{remark}

\subsection{Context extension}

Context extension is arguably the most involved component of a CwF in the 1-categorical setting.
For what it is worth, context extension contributes the largest number of components in the type theoretic representation in \cref{fig:Ambrus-changed-reordered}.
Even so, when generalising from CwF's to $\infty$-CwF's, context extension greatly benefits from the formulation of representabiliy via initiality in a category of elements (see \cref{subsec:GAT}).
Even for $\infty$-categories, representability can be stated referring only to the lowest levels of the category.
This means that context extension can be stated using only finitely many components.
Still, the presentation of \cref{fig:Ambrus-changed-reordered} does not quite work (recall that we relied on UIP in \cref{subsec:GAT}).
Let us start by formulating representability.

\begin{definition}
 Let $F : \Acat \to \TT$ be an $\infty$-functor.
 A \emph{representation} for $F$ is a tuple $(x, u, r)$ where $x : \Acat_0$, $u : F_0 \, x$, and $r : (y : \DD_0) \to (v : F_0 \, y) \to \iscontr\left(\Sigma(f : \Acat_1 \, x \, y). F_1 \, f \, u = v\right)$.
\end{definition}

Applying this definition to our case of interest leads us to:

\begin{definition}[context extension structure] \label{def:conext}
    Let an $\infty$-category $\CC$ be given together with $\infty$-functors $\Ty : \CCop \to \TT$ and $\Tm : \left(\int_{\CCop} \Ty\right) \to \TT$.
    A \emph{context extension structure} consists, for all $\Gamma : \CC_0$ and $A : \Ty_0 \, \Gamma$, of the following data:
    \begin{enumerate}
        \item an object $\Gamma \ext A : \CC_0$,
        \item a morphism $\p_{A} : \CC_1 \, (\Gamma \ext A) \, \Gamma$,
        \item and a term $\q_{A} : \Tm_0 ((\Gamma \ext A) , (\Ty_1 \, \p_{A} \, A))$.
    \end{enumerate}
    Whenever, in addition to $\Gamma$ and $A$, we have $\Theta : \CC_0$ 
    and $\sigma : \CC_1 \, \Theta \, \Gamma$ 
    and $t : \Tm_0 (\Theta , (\Ty_1 \, \sigma \, A))$, then 
    we also have the following data:
    \begin{enumerate}
        \setcounter{enumi}{3}
        \item a morphism $(\sigma \comma t) : \CC_1 \, \Theta \, (\Gamma \ext A)$,
        \item a triangle filler $\ext\upbeta_1^{A,\sigma,t} : \CC_2 \, (\sigma \comma t) \, \p_{A} \, \sigma$
        \item an equality $\ext\upbeta_2^{A,\sigma,t} : \Tm_1 \, ((\sigma \comma t) , \refl) \, \q_A = t$ over the equality we get from $\Ty_2 \, \ext\upbeta_1^{A,\sigma,t}$.
        \item for any 3-tuple $(\tau, f, e)$ of the same type as $((\sigma \comma t), \ext\upbeta_1^{A,\sigma,t}, \ext\upbeta_2^{A,\sigma,t})$, the two tuples are equal.
    \end{enumerate}
\end{definition}    

\begin{takeaway}
    \noindent
    \textbf{Section summary.}
    An $\infty$-CwF has the following components:
    \begin{enumerate}
        \item An $\infty$-category $\CC$, cf.\ \cref{def:infcategory}. This is a semisimplicial type $(\CC_0, \CC_1, \CC_2, \ldots)$ such that every inner horn has a contractible type of fillers and, for every object, we have an idempotent equivalence.
        \item $\CC$ has a terminal object, cf.\ \cref{def:terminal}. An object $x$ is terminal if every $\CC_1 \, y \, x$ is contractible.
        \item An $\infty$-functor $\Ty : \CCop \to \TT$.
        Here, $\TT$ (cf.\ \cref{def:TT}) is the semisimplicial type of (small) types and functions, with $\TT_0 \equiv \UU$ and $\TT_1 \,X \, Y \defeq (X \to Y)$.
        An $\infty$-functor (cf.\ \cref{def:ifunctor}) is a natural transformation between semisimplicial types, i.e.\ a sequence $(F_0, F_1, F_2, \ldots)$, which maps identities to equivalences.
        \item A second $\infty$-functor $\Ty : \int_{\CCop} \Ty \to \TT$. The $\infty$-category of elements (cf.\ \cref{def:catofels}) is defined 
        by first constructing the $\infty$-category of pointed types.
        \item A context extension structure, cf.\ \cref{def:conext}. Context extension can be represented in a finite way and is essentially the same as in the UIP case.
    \end{enumerate}
    Christian Sattler has suggested an alternative where the above functors $\Ty$ and $\Tm$ are replace by a sequence of right fibrations, $\Tm \twoheadrightarrow \Ty \twoheadrightarrow \CC$.
    Context extension can then be described as a right adjoint of the first fibration, in line with Awodey's presentation~\cite{awodey_natural}.
\end{takeaway}

\section{Examples of $\boldsymbol{\infty}$-Categories with Families} \label{sec:examples-of-iCwF}

\subsection{The Syntax as a QIIT} \label{subsec:syntax-as-hcwf}

We will discuss later (\cref{thm:set-cwf-is-set-cwf}) that every set-CwF in the sense of \cref{def:setCwF} and \cref{fig:Ambrus-changed-reordered} can be presented as an $\infty$-CwF.
Therefore, the \emph{syntax QIIT} by Altenkirch and Kaposi~\cite{alt-kap:tt-in-tt}, discussed in \cref{ex:initial}, is an $\infty$-CwF.

The concrete construction of the $\infty$-CwF $\CC$ from \cref{fig:Ambrus-changed-reordered} can be described as follows. One starts by defining $\CC_0$ to be $\Con$ and by setting $\CC_1 \, \Gamma \, \Delta$ to be $\Sub \, \Gamma \, \Delta$. The next level is given by $\CC_2 \, f \, g \, h \defeq (g \mc f = h)$, and all higher components of $\CC$ are contractible: $\CC_{3 + n} \blank \defeq \unit$. The other parts are constructed analogously.

It is possible to define what a morphism between $\infty$-CwF's is, and thereby to state what it means for an $\infty$-CwF $\CC$ to be \emph{initial} -- to be understood in the usual sense of \emph{homotopy initial}, i.e.\ there is a contractible type of morphisms from $\CC$ to any other given $\infty$-CwF.

While the syntax in \cref{ex:initial} is initial among set-CwF's, it is highly unclear whether, if viewed as an $\infty$-CwF (\cref{subsec:syntax-as-hcwf}), it is also initial among higher CwF's.
Instead, we can attempt to define the initial such $\infty$-CwF directly in the next section.

\subsection{The Initial Model as a HIIT} \label{subsec:initial-HIIT}

The definition of an $\infty$-CwF is fully algebraic in the sense of Cartmell~\cite{cartmell1986generalised}, although with infinitely many sorts, operations, and equations, and this is still true if we add components such as base types or $\Pi$-types.
Clearly, semisimplicial types themselves are an (infinite) generalised algebraic theory, as seen from the presentation (\ref{eq:sst1}--\ref{eq:sst3}).

For the Segal condition, this is not as obvious, but still easy to see; 
in fact, we have in \cref{subsec:highersemicats} 
chosen the specific formulation of the Segal condition via horn filling in order to simplify this part.

Assume we have a semisimplicial type $B \equiv (B_0, B_1, B_2, \ldots)$.
We can generate the 
\emph{localisation}%
\footnote{Localisations in homotopy type theory have been studied by Rijke, Shulman, and Spitters~\cite{rijke2017modalities} as well as by Christensen, Opie, Rijke, and Scoccola~\cite{christensen2018localization}.}
of $B$ at inner horn inclusions, i.e.\ the 
\emph{free $\infty$-semicategory}  
$A \equiv (A_0, A_1, A_2, \ldots)$ generated by $B$ via a huge higher inductive-inductive construction.
First, we want to ensure that $A$ contains $B$, so we unsurprisingly start with:
\begin{alignat}{3}
& \eta_0 && : &\quad & B_0 \to A_0 \\
& \eta_1 && : && (x_0 \, x_1 : B_0) \to (B_1 \, x_0 \, x_1) \to A_1 (\eta_0 \, x_0) (\eta_0 \, x_1) \\
& \eta_2 && : && \{x_0 \, x_1 \, x_2 : B_0\} \to
(x_{01} : B_1 \, x_0 \, x_1) \to 
(x_{12} : B_1 \, x_1 \, x_2) \to \notag \\
&&&&& (x_{02} : B_1 \, x_0 \, x_2) \to 
(B_2 \, x_{01} \, x_{12} \, x_{02}) \to 
A_2 \, (\eta_1 \, x_{01}) (\eta_1 \, x_{12})  (\eta_1 \, x_{02})
\end{alignat}

\noindent
Next, each level of the Segal-condition needs to be split into four parts.
For example, to state that $\Lambda^3_1$-horns have contractible types of fillers, we need (we omit implicit arguments):

\begin{alignat}{3}
 & h_2^e && : &\quad & 
 (x_{01} : A_1 \, x_0 \, x_1) \to
 (x_{12} : A_1 \, x_1 \, x_2) \to
 A_1 \, x_0 \, x_2 \\
 & h_3^e && : && 
 (x_{01} : A_1 \, x_0 \, x_1) \to
 (x_{12} : A_1 \, x_1 \, x_2) \to
 A_2 \, x_{01} \, x_{12} \, (h_2^e \, x_{01} \, x_{12})\\
 & h_2^u && : &&
 (x_{01} : A_1 \, x_0 \, x_1) \to
 (x_{12} : A_1 \, x_1 \, x_2) \to \notag \\
 &&&&& 
 (f : A_1 \, x_0 \, x_2) \to
 (A_2 \, x_{01} \, x_{12} \, f) \to
 f = (h_2^e \, x_{01} \, x_{12}) \\
 & h_3^u && : && 
 (x_{01} : A_1 \, x_0 \, x_1) \to
(x_{12} : A_1 \, x_1 \, x_2) \to \notag \\
&&&&& 
(f : A_1 \, x_0 \, x_2) \to
(F : A_2 \, x_{01} \, x_{12} \, f) \to
F =_{h_2^u} (h_3^e \, x_{01} \, x_{12})
\end{alignat}

\noindent
The intuition behind the name $h_i^{e/u}$ above is as follows: $i$ is the level, $e$ stands for \emph{existence} [of a filler], $u$ stands for \emph{uniqueness} [of the existing filler].

The other components of an $\infty$-CwF can be written as a GAT as well.
If we assume that our host type theory has an infinite version of higher inductive-inductive types as specified by Kaposi and Kov\'acs~\cite{kaposi2019signatures}, then this gives us the initial $\infty$-CwF.
The semisimplicial type $B$ can be used to specify base types, or it can be taken to be empty with base types added as part of the induction.

We do not know under which conditions this HIIT turns out to have decidable equality,
but if it does,
then it in particular is set-based in the sense of \cref{subsec:setbased} below.
This means that it would be equivalent to a set-CwF (cf.\ \cref{thm:set-cwf-is-set-cwf}), namely the syntax QIIT.

\subsection{Higher Models from Strict Models and the Standard Interpretation}

We show that, from a \emph{strict} CwF such as the standard model, we get an $\infty$-CwF.

\begin{definition}[strict CwF]
    In 2LTT, a \emph{strict category with families (sCwF)} is a CwF as in \cref{fig:Ambrus-changed-reordered} such that $\Con$, $\Sub$, $\Ty$ and $\Tm$ are all fibrant types, while all the stated equalities for contexts, substitutions, types, and terms hold up to strict equality ($\steq$).
    For context extension, we need the contractibility condition \eqref{eq:contr-represent}, stated using the usual fibrant equality type ($=$).
\end{definition}

\begin{lemma} \label{lem:strict-gives-inf}
    Given a sCwF, we can construct an $\infty$-CwF.
\end{lemma}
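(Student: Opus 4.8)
The plan is to feed the strict $1$-categorical data through the nerve-plus-Reedy-fibrant-replacement construction $R \circ \nerve$ that was used to build $\TT$ in \cref{def:TT}, and then to verify each component of an $\infty$-CwF in turn, relying on the fact (\cref{thm:id-struc-is-prop}) that the identity structure is propositional, so that it suffices merely to exhibit it rather than to check coherence.

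First I would build the $\infty$-category of contexts as $\CC \defeq R(\nerve(C))$, where $C$ is the strict category $(\Con, \Sub, \mc, \id)$ underlying the sCwF. The nerve of any strict category satisfies the Segal condition strictly, since composition provides inner-horn fillers that are unique on the nose; because Reedy fibrant replacement is a levelwise equivalence, the replaced diagram satisfies the homotopical Segal condition, i.e.\ inner horns have contractible types of fillers. The strict identity morphisms give a good identity structure: each $\id$ is an equivalence (its pre- and post-composition maps are invertible by $\idl$ and $\idr$) and idempotent (from $\id \mc \id \steq \id$), which is exactly an idempotent equivalence. For the terminal object I would use $\emptycon$: the strict data $\emptysubst$ and $\emptycon\upeta$ express that $\Sub \, \Gamma \, \emptycon$ is contractible, so after replacement $\CC_1 \, \Gamma \, \emptycon$ is contractible in the sense of \cref{def:terminal}.

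Next I would handle the two presheaves. The assignment $\Ty : \CCop \to \UU$ is a strict functor by $\substT\id$ and $\substT\scomp$, so applying $R \circ \nerve$ functorially yields an $\infty$-functor $\Ty : \CCop \to \TT$; it maps identities to equivalences (as required by \cref{def:ifunctor}) because identities are sent to identities strictly and identities are equivalences. For terms, $\Tm$ is likewise a strict functor, now over the strict category of elements of $\Ty$. Here I would check that the $\infty$-category of elements $\int_{\CCop} \Ty$ of \cref{def:catofels}, defined as a strict pullback against $\TTp \twoheadrightarrow \TT$, agrees with the Reedy fibrant replacement of $\nervep$ applied to this strict category of elements: by construction $\nervep$ is the nerve of the strict category of elements, and since the vertical maps are Reedy fibrations the strict pullback is also the homotopy pullback (as noted in \cref{rem:left-fib-classifier}), so the two descriptions coincide and $\Tm$ induces the desired $\infty$-functor.

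Finally, context extension is the finitely-presented part (\cref{def:conext}) and transfers almost verbatim: the object $\Gamma \ext A$, the morphism $\p$, the term $\q$, and the pairing $(\sigma, t)$ come directly from the sCwF, the $2$-cell $\ext\upbeta_1$ and the displayed equality $\ext\upbeta_2$ are read off from the corresponding strict equations, and the final uniqueness clause is supplied by the contractibility condition that the definition of an sCwF already phrases using the fibrant equality type. I expect the main obstacle to be the identification of the pullback-defined $\infty$-category of elements with the replaced $\nervep$ diagram in the term step, since this requires relating the abstract classifier construction of \cref{rem:left-fib-classifier} to the concrete $\nervep$ and confirming that the strict and homotopy pullbacks coincide; the remaining verifications are routine once the nerve construction is seen to preserve the strict Segal condition and the strict functoriality data.
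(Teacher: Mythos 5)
Your overall strategy is the same as the paper's: build the $\infty$-category of contexts as $R(\nerve(C))$, obtain the identity structure from the strict identities as idempotent equivalences, get $\Ty$ and $\Tm$ by applying the nerve constructions $\nerve$ and $\nervep$ followed by Reedy fibrant replacement (using that $R$ acts on strict natural transformations), and check context extension by hand since it only involves the lowest levels. You also correctly locate the one genuinely delicate point, namely the treatment of $\Tm$.

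However, your proposed resolution of that point is too quick. You claim that the strict pullback defining $\int_{\CCop} \Ty$ in \cref{def:catofels} ``coincides'' with $R\bigl(\nervep(\text{strict category of elements})\bigr)$ because the strict pullback is also a homotopy pullback. The homotopy-pullback observation only yields a levelwise \emph{equivalence} between the two semisimplicial types, not a strict identification, and the definition of an $\infty$-CwF requires an $\infty$-functor (a strict natural transformation of Reedy fibrant diagrams) out of the specific strictly-defined pullback $\int_{\CCop}\Ty$. Transporting $R(\nervep(\ldots)) \to \TT$ across a mere levelwise equivalence does not produce such a strict map. The paper explicitly flags this: fibrant replacement does not commute with the required strict pullback, and the fix is not a direct identification but a detour --- one passes to the right-fibration presentation of \cref{rem:left-fib-classifier} (where $\Tm$ is a right fibration over $\Ty$ over $\CC$) and back, using the classifier property of $\TTp \twoheadrightarrow \TT$ to produce the required functor on the nose. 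So your outline is essentially the paper's proof, but the step you describe as ``the two descriptions coincide'' needs this extra argument and does not follow from the strict-equals-homotopy-pullback remark alone.
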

\begin{proof}[Proof sketch] \let\qed\relax
The first part of the proof is given by the construction of semi-Segal types from strict categories, as described in \cite{ann-cap-kra:two-level} and sketched in \cref{subsec:types}.
To construct the remaining components, we first generate strict semisimplicial diagrams by taking nerves $\nerve$ and $\nervep$ as in \cref{subsec:terms}.
\end{proof} 

\vspace*{-.15cm}

\noindent
\begin{minipage}[t]{\textwidth -4.2cm}

\hspace*{.3cm }The Reedy fibrant replacement operation $R$ constructed in~\cite{ann-cap-kra:two-level} has the property that, from a strict natural transformation $\tau : X \to Y$ between strict diagrams over $\deltop$, we get a strict natural transformation $R(\tau) : R(X) \to R(Y)$ between the respective Reedy fibrant replacements, i.e.\  semisimplicial types. 
This follows from \cite[Cor 4.28]{ann-cap-kra:two-level}, applied on the diagram on the right. 
\end{minipage}
\hspace{\fill}
\begin{minipage}[t]{4cm}
    \vspace*{-.5cm}
    
\begin{equation*}
\begin{tikzpicture}[x= 2.5cm , y = -1cm]
\node (X) at (0,0) {$X$};
\node (RX) at (0,1) {$R(X)$};
\node (OneX) at (0,2) {$1$};
\node (Y) at (1,0) {$Y$};
\node (RY) at (1,1) {$R(Y)$};
\node (OneY) at (1,2) {$1$};

\draw[->] (X) to node[left] {} (RX);
\draw[->>] (RX) to node[left] {} (OneX);
\draw[->] (Y) to node[right] {} (RY);
\draw[->>] (RY) to node[right] {} (OneY);

\draw[->] (X) to node[above] {$\tau$} (Y);
\draw[->, dashed] (RX) to node[above] {$R(\tau)$} (RY);
\draw[->] (OneX) to node[above] {} (OneY);

\end{tikzpicture}
\end{equation*}    
\end{minipage}

\vspace*{.1cm}

Simply fibrantly replacing everything does not fully work for the $\infty$-functor $\Tm$ since the fibrant replacement does not commute with the required strict pullback, 
but switching to right fibrations and back (as discussed in \cref{rem:left-fib-classifier}) resolves the issue.
Context extension can be checked manually, since it only concerns the lowest levels. \qed

\vspace*{.2cm}
Applying \cref{lem:strict-gives-inf} on the 1-categorical standard model (which is a strict CwF, see \cref{ex:stdmodel}) shows that the standard model is also an $\infty$-CwF.

\subsection{Slicing in $\boldsymbol{\infty}$-CwF's}

As a final example, let us show that the slice construction analogous to \cref{ex:axmodel} works for $\infty$-CwF's. Slicing of $\infty$-categories in type theory works in essentially the same way as slicing in Kan simplical sets or similar settings \cite{lurie:higher-topoi}.

Given an $\infty$-category $\CC$ with an object $\Gamma : \CC_0$, we want to construct the \emph{slice $\infty$-category} $(\CC \slash \Gamma)$.
Recall that elements of $\CC_n$ can, by Yoneda, be seen as $n$-simplexes $\simplex n \functor \CC$.
We define $(\CC \slash \Gamma)_n$ to be the type of $(n+1)$-simplexes where the last vertex strictly equals $\Gamma$; in other words, $(\CC \slash \Gamma)_{n}$ is defined to be $\CC_{n+1}$ with a condition involving a strict equality.
This type is fibrant nevertheless:
$(\CC \slash \Gamma)_n$ is (strictly isomorphic to) a nested $\Sigma$-type of $(n+2)$ vertexes including the last vertex $x_{n+1} : \CC_0$, lines, triangle fillers, \ldots, and the strict equality $e : x_{n+1} = \Gamma$. But the type of pairs $(x_{n+1}, e)$ is strictly isomorphic to the unit type and therefore fibrant.

The morphism part of $(\CC \slash \Gamma)$ is given as those morphisms in $\CC_{n+1}$ which do not touch the last vertex. Thinking of $\CC_{n+1}$ as simplexes and face maps as projections as in \eqref{eq:sst-as-functor}, face maps are those projections which do not remove the last vertex $x_{n+1}$.

One can check that $(\CC \slash \Gamma)$ is a semi-Segal type. Further, if $i$ is an identity on $\Theta$ in $\CC$ and $f : \CC_1 \, \Theta \, \Gamma$ an object of $(\CC \slash \Gamma)$, then the identity on this object is given by the cell in $\CC_2 \, f \, i \, f$ of \cref{lem:id-characterisation}. Thus, $(\CC \slash \Gamma)$ is an $\infty$-category.
As in the 1-categorical case, the functors $\Ty$ and $\Tm$ as well as context extension are not affected by the slicing operation.

\begin{takeaway}
    \noindent
    \textbf{Section summary.}
    In \cref{sec:challenges}, we have seen several examples that show why the notion of a CwF implemented in type theory without UIP is unsatisfactory.
    In the current section, we have seen that all these examples work as expected as soon as we move to $\infty$-CwF's.
\end{takeaway}

\section{Variations: Set-Based, Univalent, and Finite-Dimensional Models} \label{sec:variations}

We introduce three possible properties that an $\infty$-CwF can have.
Each of these three properties is a proposition.

\subsection{Set-Based $\boldsymbol{\infty}$-CwF's} \label{subsec:setbased}

Recall that a type is said to be a \emph{set} if it satisfies UIP.
\begin{definition}[set-based $\infty$-CwF]
    An $\infty$-CwF is \emph{set-based} if, for all $\Gamma, \Delta, A$, each of the types $\CC_0$, $\CC_1 \, \Gamma \, \Delta$, $\Ty_0 \, \Gamma$, and $\Tm_0 (\Gamma,A)$ is a set.
\end{definition}

This can in more detail be described as:
\begin{lemma} \label{lem:set-based-implies}
    In a set-based $\infty$-Cwf, \emph{all} of the following apply:
    \begin{itemize}
        \item By definition, the types and type families $\CC_0$, $\CC_1$, $\Ty_0$, and $\Tm_0$ are all set-truncated;
        \item it follows that $\CC_2$, $\Ty_1$, $\Tm_1$ are families of propositions;
        \item and all remaining occurring components ($\CC_{3+i}$, $\Ty_{2+i}$, $\Tm_{2+i}$) are families of contractible types.
    \end{itemize}
\end{lemma}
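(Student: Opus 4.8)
The first bullet holds by the very definition of a set-based $\infty$-CwF, so there is nothing to prove there; the content of the lemma is the propagation of truncation levels to the higher components, and my plan is to derive all of this from a single elementary transfer principle together with the horn-filling conditions already available.

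The transfer principle I would isolate first is: for $B : A \to \UU$ with $\Sigma(a:A).\, B\,a$ contractible, (i) if $A$ is a set then every $B\,a$ is a proposition, and (ii) if $A$ is a proposition then every $B\,a$ is contractible. Part (i) I would prove by taking $b_1, b_2 : B\,a$, using contractibility of the total space to obtain a path $(a,b_1) = (a,b_2)$, decomposing it via the characterisation of equality in $\Sigma$-types into a pair $(p : a = a,\, q : \mathsf{tr}^B(p,b_1) = b_2)$, and then using that $A$ is a set to replace $p$ by $\refl$, so that $q$ becomes the desired equality $b_1 = b_2$. For (ii), contractibility of the total space makes $A$ inhabited, hence (being a proposition) contractible with some center $a_0$; then $\Sigma(a:A).\,B\,a \simeq B\,a_0$ forces $B\,a_0$, and therefore every fiber, to be contractible. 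The special case where $A$ is already contractible is immediate and is what I will use at the top levels.

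Next I would record that all three structures present their defining contractibility conditions in exactly the form this principle consumes. For the $\infty$-category $\CC$, the Segal/inner-fibration condition at level $n$ (with $0 < k < n$) says that filling the inner horn $\horn n k$ is contractible; combinatorially the two cells missing from $\horn n k$ are the $k$-th $(n-1)$-face and the interior $n$-cell, so the condition reads exactly as ``$\Sigma(c : \CC_{n-1}).\, \CC_n(\ldots,c,\ldots)$ is contractible'', with base the level-$(n-1)$ component. For $\Ty$ and $\Tm$ I would pass to the fibration reformulation of \cref{rem:left-fib-classifier}, viewing them as fibrations $\int_{\CCop}\Ty \twoheadrightarrow \CCop$ and $\Tm \twoheadrightarrow \Ty$; lifting a fixed base simplex, the fibration condition at level $n$ (now already nontrivial at $n = 1$) leaves exactly one missing fiber cell at level $n-1$ and one at level $n$, so it too reads ``$\Sigma(c : \Ty_{n-1}).\, \Ty_n(\ldots, c, \ldots)$ contractible'', with base $\Ty_{n-1}$ (and likewise for $\Tm$). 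The only reason $\CC$ ends up with one more set level than $\Ty$ and $\Tm$ is that its horn-filling condition first bites at $n = 2$ rather than $n = 1$.

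With these two ingredients the result follows by induction on the level. For $\CC$: $\CC_1$ is a set, so level $2$ gives $\CC_2$ a proposition, level $3$ gives $\CC_3$ contractible, and for $n \geq 4$ the contractibility of $\CC_{n-1}$ feeds the transfer principle to give $\CC_n$ contractible. For $\Ty$ (and identically $\Tm$, whose tower is held over a fixed base simplex): $\Ty_0$ is a set, so level $1$ gives $\Ty_1$ a proposition, level $2$ gives $\Ty_2$ contractible, and $\Ty_{2+i}$ contractible for all $i$ by the same step. I expect the main obstacle to be bookkeeping rather than anything conceptual: one must check that the cells missing from each horn are precisely one of level $n-1$ and one of level $n$, so that the base of the $\Sigma$ is a single lower component and the transfer principle applies verbatim; and one must reconcile the functor presentation of $\Ty, \Tm$ used in \cref{def:conext} with the fibration presentation that makes the indices $\Ty_1, \Tm_1$ refer to fiber data. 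Invoking \cref{rem:left-fib-classifier} to pass to right fibrations is the cleanest way to avoid the fact that, in the raw natural-transformation presentation, $\Ty_1$ would be a map of sets rather than a proposition.
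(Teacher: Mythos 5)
Your proof is correct, and for the $\CC$ part it is essentially the paper's argument: the paper simply defers to \cite[Lem 8.9.3]{nicolai:thesis} for the induction that propagates truncation levels through the Segal condition, and your transfer principle (if $\Sigma(a:A).\,B\,a$ is contractible and $A$ is an $n$-type, then each $B\,a$ is an $(n-1)$-type), combined with the observation that an inner horn is missing exactly one $(n-1)$-cell and one $n$-cell, is precisely the content of that cited lemma. Where you genuinely diverge is the treatment of $\Ty$ and $\Tm$: the paper argues concretely, using that $\int_{\CCop}\Ty$ is a strict pullback of $\TTp \twoheadrightarrow \TT$, so its fibers are literally fibers of $\TTp$, namely $\TTp_1\,f\,x\,y \equiv (f\,x = y)$ and $\TTp_2 \equiv$ an equality between equalities; these are respectively a proposition and contractible as soon as the types involved are sets, which is what ``$\TT$, $\TTp$ have the corresponding property'' means. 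You instead run the same abstract horn-filling argument on the left/right fibration $\int_{\CCop}\Ty \twoheadrightarrow \CCop$, using that its filling condition already bites at level $1$, which correctly accounts for the index shift relative to $\CC$. Both routes are valid: yours is more uniform (one transfer principle covers all three towers) at the cost of invoking the fibration property of the category-of-elements projection, while the paper's is a direct computation from the explicit description of $\TTp$. Your closing caveat is also well taken --- the lemma's indices $\Ty_1$, $\Tm_1$ only denote type families in the fibration reading, and the paper's own proof implicitly makes the same move by speaking of fibers of $\TTp$ rather than of components of the natural transformation $\Ty$.
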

\begin{proof}
    For $\CC$, this follows from the Segal condition in the same way as in \cite[Lem 8.9.3]{nicolai:thesis}.
    For $\Ty$ and $\Tm$, this is follows from the fact that $\TT$, $\TTp$ have the corresponding property.
\end{proof}

We can now conclude:

\begin{theorem} \label{thm:set-cwf-is-set-cwf}
    Set-based $\infty$-CwF's are equivalent to the set-CwF's of \cref{def:setCwF}.
\end{theorem}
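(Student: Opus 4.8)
The plan is to construct an equivalence of types between set-based $\infty$-CwF's and the set-CwF's of \cref{def:setCwF} by exhibiting maps in both directions and checking that they are mutually inverse. The backward map is essentially the construction already sketched in \cref{subsec:syntax-as-hcwf}: from a set-CwF I read off $\CC_0 \defeq \Con$, $\CC_1 \, \Gamma \, \Delta \defeq \Sub \, \Gamma \, \Delta$, $\CC_2 \, f \, g \, h \defeq (g \mc f = h)$, set all higher $\CC_{3+n}$ (and correspondingly the higher levels of $\Ty$ and $\Tm$) to be constantly contractible, and verify directly that this yields a genuine $\infty$-CwF whose underlying types are sets. The forward map forgets everything above the bottom levels and reassembles the $30$-tuple of \cref{def:setCwF}.

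The heart of the argument is that, by \cref{lem:set-based-implies}, the forgetful map loses no information. First I would use that all components $\CC_{3+i}$, $\Ty_{2+i}$, $\Tm_{2+i}$ are families of contractible types: a $\Sigma$-type over a contractible family is an equivalence onto its base, so the infinite tower describing the underlying semi-Segal type and the two $\infty$-functors collapses, up to equivalence, to the finite data $(\CC_0, \CC_1, \CC_2)$, $(\Ty_0, \Ty_1)$, and $(\Tm_0, \Tm_1)$ together with the (propositional) Segal, identity, terminality and representability conditions. Second, via the translation of \cref{subsec:highersemicats}, the triple $(\CC_0, \CC_1, \CC_2)$ equipped with the level-$2$ and level-$3$ Segal conditions is exactly a (set-truncated, since $\CC_1$ is a set and $\CC_2$ is proposition-valued) semicategory $(\ob, \hom, \mc, \ass)$: the composite $g \mc f$ is the unique filler supplied by the Segal condition and recorded by $\CC_2$, and associativity is the equation extracted from the level-$3$ condition. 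The good identity structure supplies $\id$ together with $\idl$ and $\idr$ by \cref{lem:id-characterisation}, and is a proposition by \cref{thm:id-struc-is-prop}, matching the identity block of \cref{fig:Ambrus-changed-reordered}. The terminal object, the presheaf $\Ty$, and the functor $\Tm$ over the $\infty$-category of elements collapse in the same way to their $1$-categorical, set-valued counterparts.

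For context extension I would appeal directly to \cref{subsec:GAT}: \cref{def:conext} is already a finite description, and its components (4)--(7) correspond, under set-truncation, precisely to the contractibility statement \eqref{eq:contr-represent} used there to repackage representability; items (4)--(6) give the centre of contraction and item (7) gives its uniqueness. Thus the bottom data extracted by the forward map is exactly a model in the sense of \cref{fig:Ambrus-changed-reordered} together with the four set-truncation witnesses, i.e.\ a set-CwF. Checking that the two composites are identities is then routine: composing set-CwF $\to$ $\infty$-CwF $\to$ set-CwF returns the same operations definitionally and the same laws by propositionality (all relevant equality types are propositions on sets), while the other composite is the identity because the discarded higher data lived in a contractible type and is recreated canonically.

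The step I expect to be the main obstacle is making the collapse of the infinite tower rigorous as an equivalence rather than a merely levelwise statement: one must argue that contracting away the contractible fibres $\CC_{3+i}$, $\Ty_{2+i}$, $\Tm_{2+i}$ in the 2LTT description of Reedy-fibrant diagrams genuinely produces an equivalence of the (fibrant) classifying types, and that under this collapse the Segal, identity and representability \emph{conditions} land exactly on the associativity, identity and context-extension \emph{laws} of \cref{fig:Ambrus-changed-reordered} --- in particular that the level-$3$ horn-filling condition reproduces the associator as the specific equation $\ass$ rather than merely some unnamed associativity witness. Because every type in sight is a set, these matching proofs are unique once they exist, so the difficulty lies entirely in the bookkeeping of the translation and not in any genuine coherence problem.
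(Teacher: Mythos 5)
Your proposal is correct and follows essentially the same route as the paper's own (much terser) proof: invoke \cref{lem:set-based-implies} to trivialise all higher components, use the translation of \cref{subsec:highersemicats} (i.e.\ the Capriotti--Kraus correspondence between horn-filling and semicategory structure) to recover composition and associativity, and check the remaining pieces directly. Your elaboration of the identity block via \cref{lem:id-characterisation} and \cref{thm:id-struc-is-prop}, and of context extension via the contractibility statement \eqref{eq:contr-represent}, fills in exactly the details the paper leaves under ``all remaining parts are easy to check.''
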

\begin{proof}
    The above lemma shows that almost all components of a set-based $\infty$-CwF are trivial. Together with 
    the translation  by Capriotti and the author~\cite{capKra_semisegal}, this proves that the data of the semi-Segal  type is equivalent to the representation of a semicategory in \cref{fig:Ambrus-changed-reordered}.
    All remaining parts are easy to check.
\end{proof}

Note that \cref{thm:set-cwf-is-set-cwf} also implies that the formalised syntax using UIP (in particular the \emph{syntax QIIT})
can be formulated as an $\infty$-CwF (cf.\ \cref{subsec:syntax-as-hcwf}).

\begin{example} \label{ex:std-is-set-based}
    The \emph{syntax QIIT} by Altenkirch and Kaposi~\cite{alt-kap:tt-in-tt} is an example for a set-based $\infty$-CwF, and a non-example is the standard model as in \cref{ex:stdmodel}.
\end{example}

\subsection{Univalent $\boldsymbol{\infty}$-CwF's}

Since an identity in an $\infty$-CwF $\CC$ is by definition an idempotent equivalence, we have, for all $\Gamma, \Delta : \CC_0$, an obvious map
\begin{equation} \label{eq:id2eqv}
  (\Gamma = \Delta) \to \Sigma(e : \CC_1 \, \Gamma \, \Delta). \iseqv(e).
\end{equation}
The consequent formulation of univalence is standard:
\begin{definition}[{univalent $\infty$-CwF}]
An $\infty$-category is \emph{univalent} if the function \eqref{eq:id2eqv} is an equivalence.
\end{definition}

This allows us to connect our version of higher categories to the complete semi-Segal types by Capriotti~\cite{paolo:thesis} and others.
Recall that a semi-Segal type is \emph{complete} if, for every object $\Gamma$, the type $\Sigma(\Delta : \CC_0). \Sigma(e : \CC_1 \, \Gamma \, \Delta). \iseqv(e)$ is contractible.
\begin{lemma}
    A semi-Segal type $A$ is a univalent $\infty$-category if and only if it is \emph{complete} in the sense of Capriotti~\cite{paolo:thesis}.
\end{lemma}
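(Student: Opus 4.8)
The plan is to reduce the statement to the standard \emph{fundamental theorem of identity types} (\cite[Thm.~5.8.4]{hott-book}). Fix an object $\Gamma : \CC_0$ and abbreviate $E(\Delta) \defeq \Sigma(e : \CC_1 \, \Gamma \, \Delta). \iseqv(e)$, the type of equivalences out of $\Gamma$. Both properties under consideration are of the form ``for all $\Gamma$'', so it suffices to prove, for each fixed $\Gamma$, that the map \eqref{eq:id2eqv} is a fibrewise equivalence (an equivalence $(\Gamma = \Delta) \to E(\Delta)$ for every $\Delta$) if and only if the total space $\Sigma(\Delta : \CC_0). E(\Delta)$ is contractible. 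Since this total space is exactly the type appearing in the completeness condition, quantifying over $\Gamma$ afterwards gives the lemma.

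First I would unfold \eqref{eq:id2eqv}. It is defined by path induction, sending $\refl_\Gamma$ to the canonical equivalence on $\Gamma$ supplied by the good identity structure, i.e.\ to $(i_\Gamma, s_\Gamma) : E(\Gamma)$, where $i_\Gamma$ is the identity on $\Gamma$ and $s_\Gamma : \iseqv(i_\Gamma)$. Thus \eqref{eq:id2eqv} is precisely the canonical comparison map of the fundamental theorem for the family $E$ based at $(i_\Gamma, s_\Gamma)$. The theorem then states directly that this map is a fibrewise equivalence if and only if $\Sigma(\Delta : \CC_0). E(\Delta)$ is contractible (in which case $(\Gamma, i_\Gamma, s_\Gamma)$ is necessarily the centre of contraction). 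Reading this equivalence in its two directions identifies univalence at $\Gamma$ with completeness at $\Gamma$, and taking the product over all $\Gamma : \CC_0$ finishes the argument.

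The only step requiring genuine care — and the main, if modest, obstacle — is the basepoint identification above: one must verify that \eqref{eq:id2eqv} really does send $\refl$ to the identity-induced equivalence, so that it coincides with the canonical map of the fundamental theorem rather than some other fibrewise family. Here \cref{cor:only-one-id} is useful, since it guarantees that the value $(i_\Gamma, s_\Gamma)$ is independent of any choices made in the identity structure. Once this is settled, the remaining content is the standard total-space characterisation of fibrewise equivalences, and no further coherence data is needed.
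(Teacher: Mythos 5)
Your reduction to the fundamental theorem of identity types is exactly the second half of the paper's argument: given the identity structure, the map \eqref{eq:id2eqv} is the canonical path-induction map into the family $E(\Delta) \defeq \Sigma(e : \CC_1\,\Gamma\,\Delta).\iseqv(e)$, and being a fibrewise equivalence is equivalent to contractibility of the total space, which is precisely completeness at $\Gamma$. That part is fine. (Incidentally, the basepoint worry you flag is not where any difficulty lies: the total-space criterion applies to the canonical map based at \emph{any} point of $E(\Gamma)$, so you do not need \cref{cor:only-one-id} to identify $(i_\Gamma, s_\Gamma)$ with anything in particular.)

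The gap is in the other direction. The lemma is stated for a bare semi-Segal type $A$: ``$A$ is a univalent $\infty$-category'' asserts, in particular, that $A$ \emph{has} a good identity structure, whereas completeness is a condition that makes no reference to identities. Your argument presupposes the good identity structure from the outset (you need it even to define \eqref{eq:id2eqv}, since $\refl$ must be sent to the identity-induced equivalence), so it only proves ``for an $\infty$-category, univalence $\Leftrightarrow$ completeness''. For the implication ``complete $\Rightarrow$ univalent $\infty$-category'' you must first manufacture the identity structure from completeness: the centre of contraction of $\Sigma(\Delta : \CC_0).\Sigma(e : \CC_1\,\Gamma\,\Delta).\iseqv(e)$ supplies an equivalence $e$ out of $\Gamma$, and then $I(e)$ as in \eqref{eq:I-def} is a good identity on $\Gamma$ by \cref{lem:I-is-id}. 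This is the first sentence of the paper's proof and is absent from yours; once it is added, the rest of your argument goes through and coincides with the paper's.
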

\begin{proof}
    From a complete semi-Segal type, we can construct an explicit identity structure as in \eqref{eq:I-def} using \cref{lem:I-is-id}.
    Given the identity structures, univalence and completeness both say that the total space $\Sigma(e : \CC_1 \, \Gamma \, \Delta). \iseqv(e)$ is contractible, making them equivalent.
\end{proof}

\begin{theorem} \label{thm:set-plus-univ-weird}
    In an $\infty$-CwF that is both set-based and univalent,
    the identities are the only auto-equivalences.
\end{theorem}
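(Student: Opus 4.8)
The plan is to show that, for each object $\Gamma : \CC_0$, the type of auto-equivalences at $\Gamma$ is contractible with the good identity as its centre; the statement that ``the identities are the only auto-equivalences'' is then an immediate consequence.

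First I would fix $\Gamma : \CC_0$ and note that the type of auto-equivalences at $\Gamma$ is precisely $\Sigma(e : \CC_1 \, \Gamma \, \Gamma). \iseqv(e)$, which is the codomain of the map \eqref{eq:id2eqv} instantiated at $\Delta :\equiv \Gamma$. By the univalence assumption this map is an equivalence, so
\begin{equation}
  (\Gamma = \Gamma) \; \simeq \; \Sigma(e : \CC_1 \, \Gamma \, \Gamma). \iseqv(e).
\end{equation}

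Next, since the $\infty$-CwF is set-based, $\CC_0$ is a set, so the identity type $\Gamma = \Gamma$ is a proposition; being inhabited by $\refl$, it is in fact contractible. Transporting contractibility across the displayed equivalence shows that $\Sigma(e : \CC_1 \, \Gamma \, \Gamma). \iseqv(e)$ is contractible as well. The good identity on $\Gamma$ supplied by the identity structure is by definition an idempotent equivalence, hence in particular an equivalence, and therefore inhabits this type; by contractibility it is the unique inhabitant. Consequently every auto-equivalence $e : \CC_1 \, \Gamma \, \Gamma$ is equal to the identity, which is exactly the claim.

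There is no real obstacle here: the entire argument is the single observation that univalence converts the identity type $\Gamma = \Gamma$ --- which is contractible precisely because $\CC_0$ is a set --- into the type of auto-equivalences at $\Gamma$. The only point requiring a moment's care is checking that the good identity actually lands in $\Sigma(e : \CC_1 \, \Gamma \, \Gamma). \iseqv(e)$, but this is immediate from its definition as an idempotent \emph{equivalence}, so the equivalence part is available by construction.
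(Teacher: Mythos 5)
Your proof is correct and takes essentially the same route as the paper: univalence identifies auto-equivalences at $\Gamma$ with $\Gamma = \Gamma$, which is contractible because $\CC_0$ is a set, so the (good) identity is the unique auto-equivalence. You merely spell out the contractibility transport and the membership of the good identity more explicitly than the paper's two-line argument.
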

\begin{proof}
    By univalence, equivalences and equalities coincide; and by assumption, $\refl$ is the only equality of type $\Gamma = \Gamma$.
\end{proof}

\cref{thm:set-plus-univ-weird} implies that only trivial models can be set-based and univalent at the same time, and interesting set-based representations of the syntax are definitely not univalent.
Already the simple context $(b : A, c : A)$ has the non-trivial auto-equivalence which swaps $b$ and $c$.

\begin{example}
    The standard model in \cref{ex:stdmodel} is a univalent $\infty$-CwF, while the \emph{syntax QIIT} by Altenkirch and Kaposi~\cite{alt-kap:tt-in-tt} is not univalent.
\end{example}

\subsection{Finite-Dimensional Models}

We have seen that set-based CwF's correspond to $1$-categories.
From the point of view of homotopy type theory,
it is somewhat unusual that objects and morphisms ($\CC_0 $ and $\CC_1$) are restricted to the same truncation level.
Thus, we say that an $\infty$-CwF are \emph{of dimension $1$} if $\CC_0$ is a 1-type and $\CC_1$ is a family of sets.
We also simply say that $\CC$ is a $1$-CwF.
The canonical generalisation is the following:

\begin{definition}[{finite-dimensional $\infty$-CwF} \emph{or} $n$-CwF]
    Let $n : \sN$ be a number.
    An $\infty$-CwF is of dimension $n$ if $\CC_0$ is an $n$-type, and $\CC_1$, $\Ty_0$, as well as $\Tm_0$ are families of $(n-1)$-types.
    In this case, we say that $\CC$ is an \emph{$n$-CwF}.
\end{definition}

Note that the condition on $\Ty_0$ amounts to requiring that $\Ty : \CCop \rffunctor \UU$ factors through $\UU^{\leq n} \defeq \Sigma(X : \UU). \mathsf{is}\mbox{-}n\mbox{-}\mathsf{type}(X)$, the semi-Segal type of $n$-types.
The analogous observation holds for $\Tm_0$.

\begin{lemma} \label{lem:n-dim-implies}
    If an $\infty$-Cwf is of dimension $n$, then we have:
    \begin{itemize}
        \item For all $i : \sN$, the type family $\CC_{n+i}$ is $(n-i)$-truncated (or contractible, if $n-i \leq -2$);
        \item $\Ty_0$ is an $(n-1)$-truncated family and, for all $i \geq 1$, $\Ty_{n+i}$ is $(n-i)$-truncated;
        \item and finally, $\Tm_0$ is an $(n-1)$-truncated family, while $\Tm_{n+i}$ are $(n-i)$-truncated.
    \end{itemize}
\end{lemma}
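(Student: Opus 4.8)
The statement splits into three parts, one for each of $\CC$, $\Ty$, and $\Tm$, and in each case the engine is the same: the relevant horn-filling condition exhibits a top-dimensional cell as an identity type in the cell one level below, so that the truncation level drops by exactly one as the simplicial level increases by one. This is the mechanism that already underlies \cref{lem:set-based-implies} (the case where $\CC_0$ and $\CC_1$ happen to be sets), and for $\CC$ it generalises \cite[Lem 8.9.3]{nicolai:thesis}. My plan is therefore to prove, by induction on the level, that $\CC_m$ is $(n-m)$-truncated for all $m \geq 1$, and analogously that $\Ty_m$ and $\Tm_m$ are $(n-1-m)$-truncated. Substituting $m = n+i$ and using that truncation levels are upward closed then yields the stated (slightly weaker) bounds, with contractibility once the level reaches $-2$.

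For $\CC$ I would argue as follows. The base cases $\CC_0$ ($n$-truncated) and $\CC_1$ ($(n-1)$-truncated) hold by the definition of dimension $n$. For the step, fix $m \geq 2$ and some $0 < k < m$, and fix an arbitrary full boundary of an $m$-simplex. The Segal condition (\cref{def:innerfibration} with target $1$) states that, for the horn obtained by discarding the face $d_k$ and the top cell, the type $\Sigma(d_k : \CC_{m-1}).\, \CC_m(\dots)$ of fillers is contractible. Projecting to the first component is a map out of a contractible type, so its fibre over a chosen $d_k$ — which is exactly $\CC_m$ at the given boundary — is equivalent to an identity type $c = d_k$ in $\CC_{m-1}$, where $c$ is the induced composite. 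Since $\CC_{m-1}$ is $(n-m+1)$-truncated by the induction hypothesis, this identity type is $(n-m)$-truncated; as the boundary was arbitrary, $\CC_m$ is $(n-m)$-truncated fibrewise, which gives the first bullet.

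For $\Ty$ and $\Tm$ I would run the identical induction, now driven by the right-fibration condition of \cref{rem:left-fib-classifier} rather than the bare Segal condition: for the right fibration $\Ty \twoheadrightarrow \CC$ (respectively $\Tm \twoheadrightarrow \Ty$), the horn-filling for $0 < k \leq m$ presents $\Ty_m$ (respectively $\Tm_m$) at a fixed boundary as an identity type in $\Ty_{m-1}$ (respectively $\Tm_{m-1}$), again dropping the truncation level by one. The base cases are that $\Ty_0$ and $\Tm_0$ are $(n-1)$-truncated, which is part of the hypothesis of dimension $n$; note in particular that the starting truncation must be read off from $\Ty_0$ and $\Tm_0$ and not from $\TT$, since $\TT_0 \equiv \UU$ is not truncated. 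Alternatively, exactly as in the proof of \cref{lem:set-based-implies}, one may reduce to the observation that $\TT$ and $\TTp$ carry the corresponding ``identity type in the previous level'' structure from level $2$ upwards, feeding in the given truncation of $\Ty_0$ and $\Tm_0$ as the starting datum.

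The only genuinely delicate point is bookkeeping rather than mathematics: one has to ensure that ``fixing a full boundary'' in the semisimplicial (total-space) picture matches the indices of the over/functor presentation used in \cref{def:conext}, so that the fibre computation really does land in $\CC_{m-1}$ (respectively $\Ty_{m-1}$, $\Tm_{m-1}$) and not in some larger matching object. Once the horn is set up correctly, the remaining truncation-level arithmetic — that the identity type of a $k$-truncated type is $(k-1)$-truncated, and that $k$-truncatedness implies $k'$-truncatedness for $k' \geq k$ — is entirely routine, and the three bullets follow by reading off the levels at $m = n+i$.
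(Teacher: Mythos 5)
Your proof is correct and follows essentially the same route as the paper: the paper's own proof is just ``analogous to \cref{lem:set-based-implies}'', which in turn rests on exactly the mechanism you spell out — the Segal (resp.\ right-fibration) condition exhibits each top cell as an identity type in the level below, so the truncation level drops by one per simplicial level, with the corresponding property of $\TT$ and $\TTp$ handling $\Ty$ and $\Tm$. Your observations that the induction actually yields a slightly stronger bound than stated, and that the base case must be read off from $\Ty_0$ and $\Tm_0$ rather than from $\TT_0 \equiv \UU$, are both accurate.
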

\begin{proof}
    Analogous to \cref{lem:set-based-implies}.
\end{proof}

\begin{remark}
    The awkward special cases for $\Ty_0$ and $\Tm_0$ stem from the fact that we are working with functors into $\TT$. If we work with right fibrations (see \cref{rem:left-fib-classifier}) instead, the truncation conditions are more regular.
\end{remark}

As in the set-based case, almost all components of an $n$-dimensional CwF are contractible families and can be omitted since they do not carry any information. The only exception to this is $\CC_{n+2}$, which is a contractible family but which is required to formulate the Segal condition on the level that affects $\CC_{n+1}$, i.e.\ non-contractible components depends on it.%
\footnote{This is identical to the situation encountered by Capriotti and the author~\cite{capKra_semisegal} who show that a univalent 1-category corresponds to a semi-Segal type $(A_0, A_1, A_2, A_3)$ with structure,
where it is necessary to include $A_3$ even though it is always contractible (unless, of course, one wants to replace it by another special case).}
Each of the finitely many components can be formulated purely in the inner/fibrant fragment of 2LTT.
This observation implies:

\begin{theorem} \label{thm:finite-fibrant}
    In ``plain'' homotopy type theory \cite{hott-book}, and for an externally fixed natural number $n$, we can define a type of $n$-CwF's.
    In 2LTT (even without the axiom that $\sN$ is cofibrant), for a number $n : \sN$, the type of $n$-CwF's is fibrant.
\end{theorem}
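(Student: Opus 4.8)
The plan is to use the $n$-dimensionality condition to collapse the infinite tower of data of an $\infty$-CwF down to finitely many non-trivial components, after which both claims reduce to bookkeeping. First I would invoke \cref{lem:n-dim-implies}: in an $n$-CwF, all the components of $\CC$, $\Ty$, and $\Tm$ above a finite level (of order $n$) are families of contractible types. Since a $\Sigma$-type $\Sigma(a : A). P(a)$ with $P$ pointwise contractible is canonically equivalent to $A$, such components carry no information and may be omitted from the defining record. As observed just before the theorem, the one subtlety is that the contractible family $\CC_{n+2}$ must nonetheless be retained, because the Segal horn-filling condition governing the non-trivial level $\CC_{n+1}$ refers to it; the cut-off must therefore be taken high enough to include it. The upshot is that an $n$-CwF is equivalently described by a \emph{finite} nested $\Sigma$-type: the semisimplicial levels up to a bound determined by $n$, the finitely many Segal conditions below that bound, the idempotent-equivalence identity structure, the truncated levels of $\Ty$ and $\Tm$, and the context-extension structure, which by \cref{def:conext} already involves only the lowest levels.

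For the first assertion, with $n$ externally fixed, this bound is a concrete meta-level number, so I would simply write the finite nested $\Sigma$-type out by hand. Each retained piece is a finite expression built from $\Sigma$, $\Pi$, identity types, and the predicates $\iscontr$, $\iseqv$, $\isid$, and $\mathsf{is}\mbox{-}n\mbox{-}\mathsf{type}$ --- exactly of the shape already exhibited for the first few levels in (\ref{eq:sst1})--(\ref{eq:sst3}) and (\ref{eq:E-over-A-1})--(\ref{eq:E-over-A-3}). None of these involve the full infinite tower or strict equality, so the resulting type of $n$-CwF's lives in plain homotopy type theory; the known obstruction to defining semisimplicial types concerns only the \emph{unbounded} family, never a fixed finite truncation.

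For the second assertion I would perform the same construction uniformly, by strict recursion on $n : \sN$, as a finite nested $\Sigma$-type built on top of the finite truncation $\finitedeltop{n} \rffunctor \UU$, whose fibrancy was already recorded in \cref{sec:two-level}. Fibrancy of the whole then follows because $\Sigma$- and $\Pi$-types over fibrant bases, identity types of fibrant types, and all the predicates above are fibrant, and a finite nested $\Sigma$ of fibrant types is fibrant. The point I expect to require the most care --- and the reason the cofibrancy axiom for $\sN$ is \emph{not} needed --- is precisely that this truncation is genuinely finite: adjoining the levels one at a time means fibrancy is preserved stepwise through finitely many $\Sigma$-extensions, so one never forms the infinite product over all of $\sN$ that would force the assumption of cofibrancy in order to keep $\deltop \rffunctor \UU$ fibrant. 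I would therefore phrase the recursion so that each of the finitely many steps adjoins a single fibrant $\Sigma$-component, making the preservation of fibrancy manifest and independent of any cofibrancy hypothesis.
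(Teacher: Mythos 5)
Your proposal is correct and takes essentially the same route as the paper: the paragraph immediately preceding \cref{thm:finite-fibrant} makes exactly your observation (all but finitely many components are contractible families, with $\CC_{n+2}$ retained because the Segal condition governing $\CC_{n+1}$ depends on it), and the paper's formal proof simply delegates the remaining technical content --- that a Reedy fibrant diagram with only finitely many non-contractible levels is expressible as a finite fibrant $\Sigma$-type without assuming $\sN$ cofibrant --- to \cite[Lem 11.8]{shulman_inversediagrams} and \cite[Thm 4.8]{ann-cap-kra:two-level}. The one place your sketch is lighter than those citations is in identifying your hand-built finite $\Sigma$-type with the type of $n$-dimensional $\infty$-CwF's carved out of the full tower: since fibrancy in 2LTT is not invariant under mere equivalence, that identification rests on the coskeletal-extension argument of the cited results rather than on discarding an infinite sequence of contractible choices, which is precisely where cofibrancy of $\sN$ would otherwise re-enter.
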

\begin{proof}
    This follows from a result by Shulman~\cite[Lem 11.8]{shulman_inversediagrams} and the analogous formulation in 2LTT~\cite[Thm 4.8]{ann-cap-kra:two-level}.
\end{proof}

\begin{example}
    Since $n$-CwF's are a generalisation of set-based $\infty$-CwF, what we said in \cref{ex:std-is-set-based} can be transferred: The \emph{syntax QIIT} \cite{alt-kap:tt-in-tt} is an $n$-CwF.
    The standard model from \cref{ex:stdmodel} is not an $n$-CwF for any $n : \sN$, but by replacing the universe $\UU$ in its construction by the type $\UU^{\leq n}$, we get the $(n+1)$-CwF of $n$-types.
\end{example}

\begin{takeaway}
    \noindent
    \textbf{Section summary.}
    We have identified three possible additional properties that an $\infty$-CwF can have. It can be \emph{univalent}, which is natural in homotopy type theory; an obvious example is the standard model.
    An $\infty$-CwF can further be \emph{finite-dimensional}.
    This gives rise to the notion of an \emph{$n$-CwF} which can be expressed in homotopy type theory and vastly generalises the internal CwF's that the current literature considers.
    In particular, this notion is general enough to allow non-trivial versions of the standard model, albeit restricted to $n$-types.
    \emph{Set-based $\infty$-CwF's} are a very weak special case and even less general than $1$-CwF's, but a notion that captures many developments that have been done assuming UIP.
\end{takeaway}

\section{Open Problems and Future Directions}
\label{sec:openconclusions}

We have demonstrated that higher-dimensional categories lead to a notion of model of type theory that is very well-behaved and works in situations where 1-categorical models are unsuitable.
Our work inspires numerous new questions.
First of all, it seems natural to equip the definition of an $\infty$-CwF with additional components such as $\Pi$-types, $\Sigma$-types, universes, or simple base types.
Developing a notion of \emph{$\infty$-natural transformation}, it is easy to define what a \emph{morphism} between $\infty$-categories is; but the natural expectation would be that [small] $\infty$-CwF's form a [large] $(\infty,2)$-category, and it is much less clear how this can be formulated. The intermediate goal would be to show that $\infty$-CwF's form a large $\infty$-CwF.

A very important question is whether the syntax (defined as a quotient inductive-inductive type) is initial as an $\infty$-CwF (after adding base types and other type formers). Phrased differently, we can ask whether the initial $\infty$-CwF has decidable equality.
This is very closely related to the open problem whether HoTT can ``eat'' itself (Shulman~\cite{shulman:eating}).
Our conjecture is that HoTT cannot eat itself, but that 2LTT can eat itself.
Of course, the intermediate goal that this paper is working towards is the statement that 2LTT can eat HoTT.
Even an even much weaker question, namely whether the initial $\infty$-CwF has trivial fundamental group, would be very interesting; but even this seemingly much simpler problem appears to be highly non-trivial, and similar results for seemingly simpler situations~\cite{kraus_FHG,krausVonRaumer:wellfounded} suggest that a solution will require new techniques.

If it turns out that the (set-truncated) syntax is indeed the initial $\infty$-CwF it would show that, if we give ourselves a way to talk about semisimplicial types, then HoTT can eat itself.
The other direction seems much more in reach: If HoTT can eat itself, then semisimplicial types can be defined. This has already been conjectured by Shulman~\cite{shulman:eating} but not been made precise.

Another question is whether the higher inductive-inductive definition described in \cref{subsec:initial-HIIT} really requires what is described as \emph{HIITs with infinitely many components}.
We would expect that such HIITs are not required and ordinary HIITs, albeit indexed over $\sN$, suffice to encode the desired initial model.
It also would be interesting to formulate and study \emph{higher categories with attributes} (which one would expect to be equivalent to $\infty$-CwF's), the more general \emph{higher comprehension categories}, and other internal $\infty$-categorical formulations of models that have been considered in 1-category theory.

The connection to \emph{directed type theory}~\cite{LICATA2011263,rhiel-shulman:directed,nuyts:masterthesis,NORTH2019223} is intriguing.
Future directed type theories may allows us to develop $\infty$-category theory in a synthetic way, and the relationship to the half-synthetic approach of this paper will be interesting to see.

\subsection*{Acknowledgements}

This paper has benefited from many conversations that I had with various people during the last couple of years.

I am particularly grateful for the numerous discussions with Christian Sattler and Paolo Capriotti.
I have learned many $\infty$-categorical ideas from Christian and Paolo, and their support allowed me to understand much more of the higher categorical literature than I would have managed to understand on my own.
Moreover, Paolo pointed me to abstract and alternative definitions of CwF's and representability, which proved to be very useful for the development in this paper.

I am also thankful for very helpful comments on an earlier draft version of this paper that I have received from Christian and anonymously.

Additional special thanks go to Ambrus Kaposi and Thorsten Altenkirch for explaining me their intrinsically well-typed syntax (the \emph{syntax QIIT}) until the main points sunk in.

I am grateful to the research communities in Birmingham, Budapest, Nottingham, and Pittsburgh (CMU) for the opportunities to give talks and discuss the main ideas of this paper.
These opportunities have led to valuable feedback and various improvements.

\bibliographystyle{ieeetr}
\bibliography{master}

\end{document}